\pdfoutput=1

\documentclass{scrartcl}

\usepackage{cmap}
\usepackage[utf8]{inputenc}
\usepackage[T1]{fontenc}
\usepackage{microtype}

\usepackage{amsmath}
\usepackage{amssymb}
\usepackage{mathtools}
\usepackage{mathrsfs} 
\usepackage{bm}
\usepackage{csquotes}
\usepackage{enumitem}
\usepackage{caption}
\usepackage{subcaption}

\usepackage{orcidlink}

\usepackage{amsthm}
\usepackage{thmtools}

\theoremstyle{plain}
\newtheorem{theorem}{Theorem}[section]

\newtheorem{proposition}[theorem]{Proposition}
\newtheorem{corollary}[theorem]{Corollary}
\newtheorem{fact}[theorem]{Fact}

\theoremstyle{definition}
\newtheorem{definition}[theorem]{Definition}
\newtheorem{example}[theorem]{Example}

\theoremstyle{remark}
\newtheorem{remark}[theorem]{Remark}
\newtheorem*{remark*}{Remark}


\usepackage{tikz}
\usetikzlibrary{automata, matrix, positioning, calc, decorations.pathreplacing, shapes.geometric}
\newlength{\edgelength}
\newcommand{\trans}[4]{%
  \begin{tikzpicture}[auto, shorten >=1pt, >=latex, baseline=(l.base), inner sep=0pt, outer xsep=0.3333em]
    \node (l) {\ensuremath{#1}};%
    \setlength{\edgelength}{\widthof{\scriptsize\ensuremath{#2/#3}}+0.5cm}%
    \node[base right=\edgelength of l] (r) {\ensuremath{#4}};%
    \path[->] (l.mid east) edge node[inner sep=0pt] {\scriptsize\ensuremath{#2/#3}} (r.mid west);%
  \end{tikzpicture}%
}

\usepackage{calc}
\usepackage{tabularx}
\newcommand{\problem}[3][]{%
  \par\vspace{0.125cm plus 0.1cm minus 0.05cm}\begin{tabularx}{\linewidth-2\parindent}{@{}lX}%
    \if\relax\detokenize{#1}\relax%
    \else%
    \textnormal{\textbf{Constant:}}&#1\\%
    \fi%
    \textnormal{\textbf{Input:}}&#2\\%
    \textnormal{\textbf{Question:}}&#3\\%
  \end{tabularx}\vspace{0.125cm plus 0.1cm minus 0.05cm}\par%
}

\newcommand{\changefont}[3]{\fontfamily{#1}\fontseries{#2}\fontshape{#3}\selectfont}
\newcommand*{\DecProblem}[1]{{\changefont{cmtt}{m}{sc}#1}}

\newcommand{\function}[3][]{%
  \par\vspace{0.125cm plus 0.1cm minus 0.05cm}\begin{tabularx}{\linewidth-2\parindent}{@{}lX}%
    \if\relax\detokenize{#1}\relax%
    \else%
    \textnormal{\textbf{Constant:}}&#1\\%
    \fi%
    \textnormal{\textbf{Input:}}&#2\\%
    \textnormal{\textbf{Output:}}&#3\\%
  \end{tabularx}\vspace{0.125cm plus 0.1cm minus 0.05cm}\par%
}

\newcommand*{\ComplexityClass}[1]{\textsc{#1}\xspace}

\newcommand*{\PSpace}{\ComplexityClass{PSpace}}

\newcommand*{\NP}{\ComplexityClass{NP}}
\newcommand*{\coNP}{\ComplexityClass{coNP}}

\newcommand*{\LogSpace}{\ComplexityClass{LogSpace}}

\usepackage{xspace}
\newcommand*{\SAut}{$\mathscr{S}$\kern-0.4ex-\allowbreak{}au\-to\-ma\-ton\xspace}
\newcommand*{\SAuta}{$\mathscr{S}$\kern-0.4ex-\allowbreak{}au\-to\-ma\-ta\xspace}
\newcommand*{\SIAut}{$\inverse{\mathscr{S}}$\kern-0.4ex-\allowbreak{}au\-to\-ma\-ton\xspace}
\newcommand*{\SIAuta}{$\inverse{\mathscr{S}}$\kern-0.4ex-\allowbreak{}au\-to\-ma\-ta\xspace}
\newcommand*{\GAut}{$\mathscr{G}$\kern-0.2ex-\allowbreak{}au\-to\-ma\-ton\xspace}
\newcommand*{\GAuta}{$\mathscr{G}$\kern-0.2ex-\allowbreak{}au\-to\-ma\-ta\xspace}

\newcommand*{\id}{\operatorname{id}}
\usepackage{bbold}
\newcommand{\idGrp}{\mathbb{1}}
\newcommand{\revbin}{\overleftarrow{\operatorname{bin}}}

\usepackage[affil-it]{authblk}

\author{Maximilian Kotowsky}
\affil{Insitut für Formale Methoden der Informatik (FMI)\\
  Universität Stuttgart\\
  Universitätsstraße 38\\
  70569 Stuttgart, Germany}
\author{Jan~Philipp~Wächter \orcidlink{0000-0002-7801-6569}\thanks{The second author was funded by the Deutsche Forschungsgemeinschaft (DFG, German Research Foundation) – 492814705.}}
\affil{Universität des Saarlandes\\
  Fachrichtung Mathematik\\
  Campus E2 4\\
  66123 Saarbrücken, Germany}

\usepackage{hyperref}
\hypersetup{
  colorlinks,
  linkcolor={red!50!black},
  citecolor={blue!50!black},
  urlcolor={blue!80!black}
}

\title{The Word Problem for Finitary Automaton Groups}

\begin{document}
  \maketitle
  
  \begin{abstract}
    A finitary automaton group is a group generated by an invertible, deterministic finite-state letter-to-letter transducer whose only cycles are self-loops at an identity state. We show that, for this presentation of finite groups, the uniform word problem is \coNP-complete. Here, the input consists of a finitary automaton together with a finite state sequence and the question is whether the sequence acts trivially on all input words. Additionally, we also show that the respective compressed word problem, where the state sequence is given as a straight-line program, is \PSpace-complete. In both cases, we give a direct reduction from the satisfiability problem for (quantified) boolean formulae and we further show that the problems remain complete for their respective classes if we restrict the input alphabet of the automata to a binary one.
    \newline
    \textbf{Keywords.} Automaton Group, Word Problem, Finitary, Activity.
  \end{abstract}
  
  \begin{section}{Introduction}
    There are many connections between groups and automata (see e.\,g.\ \cite{silva2012groups}). In this article, we are mostly concerned with automaton groups, where the term automaton refers to an invertible, deterministic finite-state letter-to-letter transducer. In such an automaton, every state $q$ induces a function mapping an input word $u$ to the output word obtained by starting in $q$ and following the path labeled by $u$ in the input. Since the automaton is invertible, every such function is a bijection and the closure under composition of these functions (and their inverses) forms a group. This is the group generated by the automaton and any group arising in this way is an automaton group. Not every group is an automaton group but the class of automaton groups contains some very interesting examples (see e.\,g.\ \cite{bartholdi2010groups}). Probably the most famous one is Grigochuk's group, which -- among other interesting properties -- was the historically first group of intermediate growth (i.\,e.\ the numbers of elements that can be written as a word of length at most $n$ over the generators grow slower than any exponential function but faster than any polynomial; see \cite{grigorchuk2008groups} for an introduction to this topic).
    
    These interesting examples also led to an investigation of the algorithmic properties of automaton groups, where the presentation using automata is an alternative to the classical one using (typically finitely many) generators and relations. It turns out that this presentation is still quite powerful as many decision problems remain undecidable. For example, it is known that there is an automaton group with an undecidable conjugacy problem \cite{sunic2012conjugacy} (given two group elements, check whether they are conjugate) and one with an undecidable order problem \cite{gillibert2018automaton, bartholdi2017word} (given a group element, check whether it has finite order). Decidability of the finiteness problem for automaton groups (given an automaton, check whether its generated group is finite) is still an open problem but the corresponding problem for semigroups has been shown to be undecidable \cite{gillibert2014finiteness}.
    
    The word problem (given a group element, check whether it is the neutral element), however, seems to have a special role for automaton groups. It is well known to be decidable and a guess and check approach also yields that the problem can be solved in non-deterministic linear space, even in the uniform case (where the generating automaton is also part of the input) \cite{steinberg2015some, dangeli2017complexity}. Regarding lower bounds, Armin Weiß and the second author proved that there is an automaton group with a \PSpace-complete word problem \cite{waechter2022automaton}.
    
    In this work, we will apply similar ideas to investigate the complexity of the word problem for the lowest level of the activity hierarchy for automaton groups introduced by Sidki \cite{sidki2000automorphisms}. This hierarchy classifies automaton groups based on the structure of the cycles in the generating automaton. At the lowest level, which belongs to the class of finitary automata and finitary automaton groups, the only cycles are the self-loops at an identity state (i.\,e.\ a state where the output word is always the same as the input word). It turns out that this class coincides with the class of (all) finite groups.
    
    On the next level, the class of bounded automata and bounded automaton groups, every path in the automaton may contain at most one cycle (not counting the self-loops at a possible identity state). This class still seems \enquote{finite enough} for many problems to be decidable. For example, the finiteness problem \cite{waechter2021orbits} as well as the order problem \cite{bondarenko2013conjugacy} are decidable and there are positive results on the conjugacy problem \cite{bondarenko2013conjugacy}; the word problem of a bounded automaton group can be solved in deterministic logarithmic space \cite{nekrashevych2005self, bartholdi2019groups} and its complement is an ET0L language \cite{bishop2019bounded}.
    
    We will be interested in the finitary level. As we have discussed, studying the word problem of these groups is the same as studying the word problem of arbitrary finite groups. It is well known that a group is finite if and only if its word problem (i.\,e.\ the formal language of words over the generators representing the neutral element) is regular. While this does not settle the precise complexity for the individual groups entirely, we will approach this setting from a different perspective. We will consider the uniform word problem, where the group is part of the input in a suitable presentation. Typical such presentations include, for example, the classical one with generators and relations, Caley graphs and tables or presenting the elements as matrices or permutations (where the representation as permutations may be considered a special case of the representation as matrices). For Cayley tables, the problem can be solved in deterministic logarithmic space (by iterated lookups in the table) and the same is true for matrix representations \cite{1lipton1977word}. Since the word problem of every non-solvable finite group is $\ComplexityClass{NC}^1$-complete \cite{barrington89boundedWidth}, we immediately get a lower bound for any group representation. For permutations, there are also lower bound results regarding deterministic logarithmic space \cite{cook1987problems}.
    
    Our presentation of choice is that of using an automaton (in the way described above). Here, we will show that the uniform word problem is \coNP-complete by giving a direct reduction from the satisfiability problem for boolean formulae. We even show that the problem remains \coNP-complete if we restrict the possible input automata to ones with a binary alphabet.
    Then, we will show that the uniform compressed word problem, where the input state sequence is not given directly but only compressed in the form of a context-free grammar (or, more precisely, a straight-line program), is \PSpace-complete and, thus, exponentially harder (under common complexity theoretic assumptions). This reflects a similar (provable) exponential gap in the general case \cite{waechter2022automaton}. We prove this latter result by giving a direct reduction from the satisfiability problem for quantified boolean formulae but also by approximating the word problem of Grigorchuk's group (which is known to be \PSpace-complete \cite{bartholdi2019groups}) using finitary automata. The latter approach is less direct but shows that also the uniform compressed word problem for finitary automaton groups remains \PSpace-complete if we restrict the input to automata with binary alphabet.
    
    The approach of simulating logical formulae in automata is similar to the techniques used in \cite{waechter2022automaton} and we hope that the general idea can be extended to further settings, for example, to obtain lower bound results for further levels of the activity hierarchy. The underlying idea is to use certain commutators for simulating logical conjunctions. This is often attributed to Barrington, who used this approach to show the above-mentioned result on the $\ComplexityClass{NC}^1$-completeness of the word problem of non-solvable finite groups \cite{barrington89boundedWidth} (see \cite{bartholdi2019groups} for more results in that direction). However, there are also similar ideas predating Barrington \cite{makanin1984decidability, malcev1962equation, maurer1965property, krohn1966realizing}.
  \end{section}

  \begin{section}{Preliminaries}
    \paragraph*{Logic.}
    For this paper, we will require some basic knowledge about propositional and first-order logic. We use $\bot$ to denote a \emph{false} truth value and $\top$ to denote the truth value \emph{true}. We let $\mathbb{B} = \{ \bot, \top \}$ and may evaluate the truth value $\mathcal{A}(\varphi)$ of a formula $\varphi$ over the \emph{variables} $\mathbb{X}$ under an \emph{assignment} $\mathcal{A}: \mathbb{X} \to \mathbb{B}$ in the usual way. If this evaluates to $\top$, we say that $\mathcal{A}$ \emph{satisfies} $\varphi$ and $\varphi$ is \emph{satisfiable} if it is satisfied by some assignment. A \emph{literal} is either a variable $x$ or the negation $\lnot x$ of a variable. In the first case, the literal is \emph{positive} and, in the second case, it is \emph{negative}. A \emph{clause} is a disjunction $\bigvee_{i = 1}^n L_i$ of literals $L_i$. A conjunction $\bigwedge_{k = 1}^K C_k$ of clauses $C_k$ is a formula in \emph{conjunctive normal form}. If all the clauses contain exactly $3$ distinct literals, we say that the formula is in \emph{$3$-conjunctive normal form}.
    
    \paragraph*{Complexity.}
    We need some notions from complexity theory for this paper. However, we will not go into details about complexity theory and refer the reader to standard textbooks on the topic (such as \cite{papadimitriou97computational}) instead. Regarding complexity classes, we need the class \coNP which contains all problems whose complements can be solved in non-deterministic polynomial time (i.\,e.\ are in \NP) and the class \PSpace of problems solvable in polynomial space (where it does not matter whether we consider deterministic or non-deterministic algorithms by Savitch's theorem \cite[Theorem~7.5]{papadimitriou97computational}).
    We point out that a problem is in \PSpace if and only if its complement is.
    Additionally, we need \LogSpace-computable functions (where \LogSpace refers to deterministic logarithmic space). When it comes to reductions, we will exclusively work with \emph{many-one \LogSpace-reductions}. Formally, such a reduction from a problem $A$ to a problem $B$ is a \LogSpace-computable function $f$ mapping instances of $A$ to instances of $B$ such that positive instances are mapped to positive instances and negative instances are mapped to negative ones. A problem $A$ is \emph{$\mathcal{C}$-hard} for some complexity class $\mathcal{C}$ if any problem $C \in \mathcal{C}$ can be reduced to $A$ (using a many-one \LogSpace-reduction). Typically, this is done by reducing a problem which is already known to be $\mathcal{C}$-hard to $A$ (as many-one $\LogSpace$-reductions are closed under composition, see e.\,g.\ \cite[Proposition~8.2]{papadimitriou97computational}). If a $\mathcal{C}$-hard problem is also contained in $\mathcal{C}$, it is \emph{$\mathcal{C}$-complete}.
    
    \paragraph*{Words and Group Operations.}
    An alphabet is a non-empty, finite set $\Sigma$. A finite sequence $w = a_1 \dots a_\ell$ of elements $a_1, \dots, a_\ell \in \Sigma$ is a \emph{word} and its \emph{length} is $|w| = \ell$. The unique word of length $0$ is denoted by $\varepsilon$ and the set of all words over $\Sigma$ is $\Sigma^*$, which forms a monoid whose operation is the concatenation of words (and whose neutral element is $\varepsilon$).
    We use $\Sigma^\ell$ to denote the set of words of length exactly $\ell$ and natural variations of this notation (such as writing $\Sigma^{< \ell}$ for the set of words of length (strictly) smaller than $\ell$).
    
    We will often work with words in the context of generating a group. In this case, we assume that, for an alphabet $Q$, we have a disjoint copy $Q^{-1} = \{ q^{-1} \mid q \in Q \}$ of formal inverse letters. For the set of words over such positive and negative letters, we write $Q^{\pm *} = (Q \cup Q^{-1})^*$ and we may extend the notation $q^{-1}$ to words by letting $(q_1 \dots q_\ell)^{-1} = q_\ell^{-1} \dots q_1^{-1}$ where we additionally use the convention $(q^{-1})^{-1} = q$. We say a group $G$ is \emph{generated} by $Q$ if there is a monoid homomorphism $\pi: Q^{\pm *} \to G$ with $\pi(q^{-1}) = \pi(q)^{-1}$. In this context, we write $\bm{p} = \bm{q}$ \emph{in $G$} for $\pi(\bm{p}) = \pi(\bm{q})$ (where $\bm{p}, \bm{q} \in Q^{\pm *}$) and also $\bm{p} = g$ \emph{in $G$} if $\pi(\bm{p}) = g$. So, for example, we write $\bm{p} = \idGrp$ in $G$ if $\pi( \bm{p} )$ is the neutral element of the group $G$, which we usually denote by $\idGrp$.
    
    In addition to taking the inverse, we lift further group operations to words. In analogy to the \emph{conjugation} $g^k = k^{-1} g k$ of some group element $g \in G$ by another one $k \in G$, we also write $\bm{q}^{\bm{p}}$ for the word $\bm{q}^{\bm{p}} = \bm{p}^{-1} \bm{q} \bm{p}$ (where $\bm{p}, \bm{q} \in Q^{\pm *}$). Note that this notation is compatible with the conjugation as we have $\pi(\bm{q}^{\bm{p}}) = \pi(\bm{q})^{\pi(\bm{p})}$. We also do the same for the \emph{commutator} $[h, g] = h^{-1} g^{-1} hg$ of two group elements $g, h \in G$ and write $[\bm{q}, \bm{p}]$ for the word $[\bm{q}, \bm{p}] = \bm{q}^{-1} \bm{p}^{-1} \bm{q} \bm{p}$. Again, this is compatible with the projection $\pi$: $\pi\left( [\bm{q}, \bm{p}] \right) = [ \pi(\bm{q}), \pi(\bm{p}) ]$.

    \paragraph*{Automata and Automaton Groups.}
    In the context of this paper, an automaton is a finite state letter-to-letter transducer. Formally, an \emph{automaton} $\mathcal{T}$ is a triple $(Q, \Sigma, \delta)$ where $Q$ is a finite, non-empty set of \emph{states}, $\Sigma$ is the (input and output) \emph{alphabet} of $\mathcal{T}$ and $\delta \subseteq Q \times \Sigma \times \Sigma \times Q$ is the \emph{transition} relation. In this context, we usually write $\trans{p}{a}{b}{q}$ for the tuple $(p, a, b, q) \in Q \times \Sigma \times \Sigma \times Q$. This is a transition \emph{starting} in $p$, \emph{ending} in $q$ with \emph{input} $a$ and \emph{output} $b$.
    
    An automaton $\mathcal{T} = (Q, \Sigma, \delta)$ is \emph{deterministic} and \emph{complete} if we have $d_{p, a} = | \{ \trans{p}{a}{b}{q} \mid b \in \Sigma, q \in Q \} | = 1$ for all $p \in Q$ and $a \in \Sigma$. It is additionally \emph{invertible} if we also have $d'_{p, b} = | \{ \trans{p}{a}{b}{q} \mid a \in \Sigma, q \in Q \} | = 1$ for all $p \in Q$ and $b \in \Sigma$. We will call a deterministic, complete and invertible automaton a \emph{\GAut}.
    
    \begin{figure}%
      \centering%
      \begin{subfigure}[t]{0.33\linewidth}%
        \centering%
        \begin{tikzpicture}[baseline=(m-2-1.base)]
          \matrix[matrix of math nodes, text height=1.25ex, text depth=0.25ex] (m) {
            & a & \\
            p & & q \\
            & b & \\
          };
          \foreach \i in {1} {
            \draw[->] let
              \n1 = {int(2+\i)}
            in
              (m-2-\i) -> (m-2-\n1);
            \draw[->] let
              \n1 = {int(1+\i)}
            in
              (m-1-\n1) -> (m-3-\n1);
          };
        \end{tikzpicture}%
        \caption{Cross diagrams}\label{fig:singleCrossDiagram}%
      \end{subfigure}\hfill%
      \begin{subfigure}[t]{0.33\linewidth}%
        \centering%
        \begin{tikzpicture}[baseline=(m-2-1.base)]
          \matrix[matrix of math nodes, text height=1.25ex, text depth=0.25ex] (m) {
            & b & \\
            p^{-1} & & q^{-1} \\
            & a & \\
          };
          \foreach \i in {1} {
            \draw[->] let
             \n1 = {int(2+\i)}
            in
              (m-2-\i) -> (m-2-\n1);
            \draw[->] let
              \n1 = {int(1+\i)}
            in
             (m-1-\n1) -> (m-3-\n1);
          };
        \end{tikzpicture}%
        \caption{Inverse cross diagrams}\label{fig:inverseCrossDiagram}%
      \end{subfigure}\hfill%
      \begin{subfigure}[t]{0.33\linewidth}%
        \centering%
        \begin{tikzpicture}[baseline=(m-2-1.base)]
          \matrix[matrix of math nodes, text height=1.25ex, text depth=0.25ex, ampersand replacement=\&] (m) {
                   \& u \& \\
            \bm{p} \&   \& \bm{q} \\
                   \& v \& \\
          };
          \foreach \i in {1} {
            \draw[->] let
              \n1 = {int(2+\i)}
            in
              (m-2-\i) -> (m-2-\n1);
            \draw[->] let
              \n1 = {int(1+\i)}
            in
              (m-1-\n1) -> (m-3-\n1);
          };
        \end{tikzpicture}%
        \caption{Abbreviated cross diagram}\label{fig:abbreviatedCrossDiagram}%
      \end{subfigure}\newline
      \hfill\begin{subfigure}[t]{0.6\linewidth}%
        \centering%
        \begin{tikzpicture}[baseline=(m-4-1.east)]
          \matrix[matrix of math nodes, text height=1.25ex, text depth=0.25ex, ampersand replacement=\&] (m) {
                     \& a_{0, 1}     \&          \& \dots \&              \& a_{0, m}     \&     \\
            q_{1, 0} \&              \& q_{1, 1} \& \dots \& q_{1, m - 1} \&              \& q_{1, m} \\
                     \& a_{1, 1}     \&          \&       \&              \& a_{1, m}     \&     \\
              \vdots \& \vdots       \&          \&       \&              \& \vdots       \& \vdots \\
                     \& a_{n - 1, 1} \&          \&       \&              \& a_{n - 1, m} \&     \\
            q_{n, 0} \&              \& q_{n, 1} \& \dots \& q_{n, m - 1} \&              \& q_{n, m} \\
                     \& a_{n, 1}     \&          \& \dots \&              \& a_{n, m}     \&     \\
          };
          \foreach \j in {1, 5} {
            \foreach \i in {1, 5} {
              \draw[->] let
                \n1 = {int(2+\i)},
                \n2 = {int(1+\j)}
              in
                (m-\n2-\i) -> (m-\n2-\n1);
              \draw[->] let
                \n1 = {int(1+\i)},
                \n2 = {int(2+\j)}
              in
                (m-\j-\n1) -> (m-\n2-\n1);
            };
          };
        \end{tikzpicture}%
        \caption{Multiple crosses combined in one diagram}\label{fig:combinedCrossDiagram}%
      \end{subfigure}\hfill%
      \caption{Single, inverted, combined and abbreviated cross diagrams}
    \end{figure}
    Another way of indicating that we have a transition $\trans{p}{a}{b}{q} \in \delta$ is to use the cross diagram in \autoref{fig:singleCrossDiagram}.
    Multiple cross diagrams may be combined into a larger one. For example, the cross diagram in \autoref{fig:combinedCrossDiagram} indicates that we have $\trans{q_{i, j - 1}}{a_{i - 1, j}}{a_{i, j}}{q_{i, j}} \in \delta$ for all $1 \leq i \leq n$ and $1 \leq j \leq m$.
    Typically, we will omit unnecessary intermediate states if we do not need to name them. Additionally, we also allow abbreviations in the form of words (instead of only single letters) in the input and output and state sequences (i.\,e.\ words over $Q$) on the left and the right. Note, however, that here the right-most state of the sequence is considered to be the first state,\footnote{This makes sense as we will later on define a \emph{left} action of the states on the words over $\Sigma$.} which results in the abbreviated cross diagram in \autoref{fig:abbreviatedCrossDiagram} for $\bm{p} = q_{n,0} \dots q_{1, 0}$, $u = a_{0, 1} \dots a_{0, m}$, $v = a_{n, 1} \dots a_{n, m}$ and $\bm{q} = q_{n, m} \dots q_{1, m}$.
    
    For a deterministic and complete automaton $\mathcal{T} = (Q, \Sigma, \delta)$, there exists exactly one cross diagram of the form in \autoref{fig:abbreviatedCrossDiagram} for every $\bm{p} \in Q^*$ and $u \in \Sigma^*$. If $\mathcal{T}$ is additionally invertible (i.\,e.\ it is a \GAut), we define that we have the cross diagram in \autoref{fig:inverseCrossDiagram} for $p, q \in Q$ and $a, b \in \Sigma$ whenever we have the cross diagram from \autoref{fig:singleCrossDiagram}. Note that we have flipped the cross diagram along its horizontal axis and inverted the states. In this case, the cross diagram in \autoref{fig:abbreviatedCrossDiagram} uniquely exists for all $\bm{p} \in Q^{\pm *}$ and all $u \in \Sigma^*$ (although we now also allow states from $Q^{-1}$).
    
    This allows us to define a left action of $Q^{\pm *}$ on $\Sigma^*$ where the action of $\bm{q} \in Q^{\pm *}$ on a word $u \in \Sigma^*$ is given by $\bm{q} \circ u = v$ where $v$ is uniquely obtained from the cross diagram \autoref{fig:abbreviatedCrossDiagram} (the empty state sequence acts as the identity on all words by convention). The reader may verify that we indeed have $\bm{q}^{-1} \bm{q} \circ u = u = \bm{q} \bm{q}^{-1} \circ u$ with our definition of inverting cross diagrams.
    
    For two state sequences $\bm{p}, \bm{q} \in Q^{\pm *}$ of a \GAut $\mathcal{T} = (Q, \Sigma, \delta)$, we define the relation
    \[
      \bm{p} =_\mathcal{T} \bm{q} \iff \forall u \in \Sigma^*: \bm{p} \circ u = \bm{q} \circ u \text{.}
    \]
    It turns out that this relation is a congruence, which allows us to consider the monoid $Q^{\pm *} / {=_\mathcal{T}}$ formed by its classes. In fact, this monoid has a group structure (where the class of $\bm{q}^{-1}$ is the inverse of the class of $\bm{q}$) and this is the \emph{group generated} by $\mathcal{T}$. Any group generated by some \GAut is called an \emph{automaton group}.
    
    \paragraph*{The Dual Action.}
    The \GAut $\mathcal{T} = (Q, \Sigma, \delta)$ does not only induce the left action $\bm{p} \circ u$ for every $\bm{p} \in Q^{\pm*}$ and every $u \in \Sigma^*$. We may also define a right action of $\Sigma^*$ on $Q^{\pm *}$. Since the cross diagram in \autoref{fig:abbreviatedCrossDiagram} is unique for all $\bm{p} \in Q^{\pm *}$ and $u \in \Sigma^*$, we may let $\bm{p} \cdot u = \bm{q}$. The reader may observe that this is indeed a right action (from the way cross diagrams work) and this is called the \emph{dual action} (of $\mathcal{T}$). It is compatible with the relation $=_{\mathcal{T}}$ (i.\,e.\ we have $\bm{p} =_{\mathcal{T}} \bm{q} \implies \bm{p} \cdot u =_{\mathcal{T}} \bm{q} \cdot u$ for all $\bm{p}, \bm{q} \in Q^{\pm*}$ and all $u \in \Sigma^*$), which means that the dual action is actually a right action of $\Sigma^*$ on $\mathscr{G}(\mathcal{T})$. Furthermore, there is an interaction between the action $\bm{q} \circ u$ and the dual action $\bm{q} \cdot u$. In particular, we have
    \[
      \bm{q} \circ uv = (\bm{q} \circ u) (\bm{q} \cdot u \circ v)
    \]
    for all $\bm{q} \in Q^{\pm*}$ and all $u, v \in \Sigma^*$.
    
    \paragraph*{Graphical Depiction and Finitary Automata.}
    We use the common graphical depiction of automata, which results in a $\Sigma \times \Sigma$-labeled finite directed graph (see, e.\,g.\ \autoref{fig:grigorchuk}). If this graph does not have any cycles except for the $a / a$ labeled self-loops at an identity state,\footnote{Note that any complete finite automaton must contain a cycle and that, thus, every finitary \GAut has an identity state.} we say that the automaton is \emph{finitary}. The \emph{depth} of a finitary \GAut is the minimal number $d$ such that, after reading at least $d$ many letters, we are always in the identity state (regardless of where we started). A group generated by a finitary \GAut is a \emph{finitary} automaton group. Since, with a finitary \GAut, a state sequence may only act non-trivially on the first $d$ letters (where $d$ is the depth of the generating automaton), a finitary automaton group is necessarily finite. On the other hand, any finite group $G$ is generated by the finitary automaton $(G, G, \delta)$ with $\delta = \{ \trans{g}{h}{gh}{\idGrp} \mid g, h \in G \}$. Thus, studying finitary automaton groups is the same as studying finite groups but we are interested in a certain way of presenting these groups.
    
    \paragraph*{Contracting Automata.}
    The next step after finitary automaton groups in Sidki's activity hierarchy \cite{sidki2000automorphisms} is the class of bounded automaton groups. We will not give precise definitions here (see e.\,g.\ \cite{waechter2024word} for those and more information) but instead only mention that a \GAut is \emph{bounded} if every run has at most one cycle that does not consist of only $a/a$ self-loops at an identity state.
    
    Instead of looking at bounded automaton groups, we will need the (more general) notion of contracting automata. For this, we first define the \emph{word norm} induced by a \GAut $\mathcal{T} = (Q, \Sigma, \delta)$. It is the function $\| \cdot \| : Q^{\pm*} \to \mathbb{N}$ with
    \[
      \| \bm{q} \| = \min \{ | \bm{q}' | \mid \bm{q}' =_{\mathcal{T}} \bm{q}, \bm{q}' \in Q^{\pm*} \} \text{.}
    \]
    Clearly, $\bm{p} =_{\mathcal{T}} \bm{q}$ implies $\| \bm{p} \| = \| \bm{q} \|$ and we may, thus, consider the word norm as a function $\mathscr{G}(\mathcal{T}) \to \mathbb{N}$. The word norm is \emph{contracting} if there are constants $\lambda > 1$, $\nu \geq 0$ and $\ell \in \mathbb{N}$ with
    \[
      \| \bm{q} \cdot u \| \leq \frac{1}{\lambda} \| \bm{q} \| + \nu
    \]
    for all $\bm{q} \in Q^{\pm*}$ and $u \in \Sigma^\ell$. A \GAut is \emph{contracting} if its induced word norm is. The values of $\lambda$, $\nu$ and $\ell$ are called the \emph{contraction constants} of the automaton.
    
    The idea behind the notion of a contracting automaton is that group elements get \emph{shorter} after reading a block of $\ell$ letters until we eventually reach an element from a finite set (again, for more information on what this means precisely, we refer the reader for example to \cite{waechter2024word}).
    
    \begin{figure}\centering
      \begin{tikzpicture}[auto, shorten >=1pt, >=latex, baseline=(b.base)]
        \node[state] (b) {$b$};
        \node[state, above right=of b] (a) {$a$};
        \node[state, below right=of b] (d) {$d$};
        \node[state, below right=of a] (c) {$c$};
        \node[state, right=1cm of c] (id) {$\id$};

        \draw[->] (a) edge[bend left] node[align=center] {$0/1$\\$1/0$} (id)
                  (b) edge node {$0/0$} (a)
                  (b) edge node[swap] {$1/1$} (c)
                  (c) edge node[swap] {$0/0$} (a)
                  (c) edge node {$1/1$} (d)
                  (d) edge[bend right] node[swap] {$0/0$} (id)
                  (d) edge node {$1/1$} (b)
                  (id) edge[loop right] node[align=center] {$0/0$\\$1/1$} (id)
        ;
      \end{tikzpicture}
      \caption{Automaton generating Grigorchuk's group}\label{fig:grigorchuk}
    \end{figure}
    \begin{figure}\centering
      \begin{tikzpicture}[baseline=(m-2-1.base)]
        \matrix[matrix of nodes, text height=1.25ex, text depth=0pt, ampersand replacement=\&,
        ] (m) {
          \& \tiny $0$ or $1$ \& \\
          $(\star)$ \& \& $(\star, a \text{ or } \textnormal{id})$ \\
          \& \tiny $0$ or $1$ \& \\
          $a$ \& \& \textcolor{gray}{$\id$} \\
          \& \tiny $1$ or $0$ \& \\
          $\star$ \& \& $\star, a \text{ or } \textnormal{id}$ \\
          \& \tiny $1$ or $0$ \& \\
          \raisebox{-0.5ex}{\vdots} \& \raisebox{-0.5ex}{\vdots} \& \raisebox{-0.5ex}{\vdots} \\
          \& \tiny $1$ or $0$ \& \\
          $a$ \& \& \textcolor{gray}{$\id$} \\
          \& \tiny $0$ or $1$ \& \\
          $(\star)$ \& \& $(\star, a \text{ or } \textnormal{id})$ \\
          \& \tiny $0$ or $1$ \& \\
        };
        \draw[decorate, decoration={brace}] (m-12-1.south west) -- node[left, xshift=-2pt] {$\bm{p}$} (m-2-1.north west);
        \foreach \j in {1,3,5,9,11} {
          \foreach \i in {1} {
            \draw[->] let
              \n1 = {int(2+\i)},
              \n2 = {int(1+\j)}
            in
              (m-\n2-\i) -> (m-\n2-\n1);
            \draw[->] let
              \n1 = {int(1+\i)},
              \n2 = {int(2+\j)}
            in
              (m-\j-\n1) -> (m-\n2-\n1);
          };
        };
        \draw[decorate, decoration={brace}] (m-2-3.north east) -- node[right, yshift=-2.5ex, xshift=2pt] {$\bm{p} \cdot z$ with $\begin{aligned}[t]
            \| \bm{p} \cdot z \| &\leq |\bm{p}| {}- |\bm{p}|_a \\
            &\leq \frac{|\bm{p}| - 1}{2}
          \end{aligned}$} (m-12-3.south east);
      \end{tikzpicture}
      \caption{Illustration of the main idea for proving that Grigorchuk's group is contracting}\label{fig:grigorchukContracting}
    \end{figure}
    \begin{example}[Grigorchuk's Group]\label{ex:grigorchuk}
      One of the most well-known automaton groups (and probably the most famous one) is \emph{Grigorchuk's group}. It is generated by the \GAut $\mathcal{G}$ with binary alphabet depicted in \autoref{fig:grigorchuk}. We refer the reader for more details on Grigorchuk's group to \cite{grigorchuk2008groups}, \cite{nekrashevych2005self} or \cite{waechter2024word} but we will give the main idea of why it is contracting (with constants $\lambda = {2}$ and $\nu = \ell = 1$).

      The action of $a$ on an input word is to toggle the first letter between $0$ and $1$. This implies $a^2 = \idGrp$ in $\mathscr{G}(\mathcal{G})$. With the proper case distinctions (on the length of a prefix only consisting of symbols $1$ modulo $3$),\footnote{Again, see \cite{grigorchuk2008groups} or \cite{waechter2024word} for a full proof.} one can prove that we also have $b^2 = c^2 = d^3 = \idGrp$ in $\mathscr{G}(\mathcal{G})$; in particular, every state is equal to its own inverse in the generated group. Finally, the product of two distinct elements $x, y \in \{ b, c, d \}$ is the third one from this set (which may be seen using a similar case distinction).
      This immediately yields a length reducing rewriting system and shows that any state sequence $\bm{q} \in Q^{\pm*}$ is equal to a state sequence $\bm{p} \in Q^*$ of the form
      \[
        (\star) a \star a \dots \star a (\star)
      \]
      where $\star$ are (possibly different) elements from $\{ b, c, d \}$
      (i.\,e.\ it is alternating between an $a$ and a letter from $\{ b, c, d \}$ and may start and end with either). We may (without loss of generality) assume that $\| \bm{q} \| = \| \bm{p} \| = | \bm{p} |$ for this state sequence and obtain
      \[
        | \bm{p} |_a \geq \frac{|\bm{p}| - 1}{2} \geq \frac{|\bm{p}|}{2} - 1
      \]
      where $|\bm{p}|_a$ is the result of counting how often the state $a$ appears in $\bm{p}$.
      
      Now, since we have $a \cdot 0 = a \cdot 1 = \id$, we obtain\footnote{Compare to \autoref{fig:grigorchukContracting}.}
      \[
        \| \bm{q} \cdot z \| = \| \bm{p} \cdot z \| \leq |\bm{p}| - |\bm{p}|_a \leq |\bm{p}| - \frac{|\bm{p}|}{2} + 1 = \frac{1}{2} |\bm{p}| + 1 = \frac{1}{2} \| \bm{q} \| + 1
      \]
      for all $z \in \{ 0, 1 \}$.
    \end{example}
    \begin{remark}
      There is a different way to see that the automaton generating Grigorchuk's group from \autoref{fig:grigorchuk} is contracting (although the constants are not immediately obvious from this approach): it is bounded and every bounded automaton is contracting \cite{bondarenko2003postCritically} (alternatively, see \cite{nekrashevych2005self} or \cite{waechter2024word}).
    \end{remark}
    
    The main point of contracting automata is that they allow us to distinguish different elements of the generated group by a witness of logarithmic length.
    \begin{fact}\label{fct:contractingLogarithmicWitness}
      For every contracting \GAut $\mathcal{T} = (Q, \Sigma, \delta)$, there are constants $A$ and $B$ such that, for every $\bm{p} \in Q^{\pm *}$ with $\bm{p} \neq \idGrp$ in $\mathscr{G}(\mathcal{T})$, there is a witness $w \in \Sigma^*$ of length $|w| \leq A \log |\bm{p}| + B$ with $\bm{p} \circ w \neq w$ (where $\log n$ denotes the logarithm of $n$ for basis $2$).
    \end{fact}
    \begin{proof}
      Let $\lambda, \nu$ and $\ell$ be the contraction constants for $\mathcal{T}$ and choose $A = \frac{\ell}{\log \lambda}$.
      Also let
      \[
        L = 1 + \nu \frac{1}{1 - \lambda^{-1}}
      \]
      and observe that there are only finitely many state sequences $\bm{q} \in Q^{\pm *}$ of length $|\bm{q}| \leq L$. Therefore, there is some constant $F$ such that, for every such $\bm{q}$ with $\bm{q} \neq \idGrp$ in $\mathscr{G}(\mathcal{T})$, there is a witness $w$ of length $|w| \leq F$ with $\bm{q} \circ w \neq w$. In fact, we may extend this statement to all $\bm{q}$ that are equivalent under $=_{\mathcal{T}}$ to a state sequence of length at most $L$, i.\,e.\ to all $\bm{q} \in Q^{\pm *}$ with $\| \bm{q} \| \leq L$. For this choice of $F$, let $B = \ell + F$.
      
      Now, fix some $\bm{p} \in Q^{\pm*}$ with $\bm{p} \neq \idGrp$ in $\mathscr{G}(\mathcal{T})$. We first use the contraction property to show by induction that for every word $u \in \Sigma^*$ of length $|u| = \ell s$ (with $s \geq 0$), we have
      \[
        \| \bm{p} \cdot u \| \leq \lambda^{-s} \| \bm{p} \| + \nu \sum_{i = 0}^{s - 1} \lambda^{-i} \text{.}
      \]
      This is immediate for $s = 0$. For the inductive step from $s$ to $s + 1$, we may write $u = u_0 u_1$ for $|u_0| = \ell s$ and $|u_1| = \ell$ and obtain (from the contraction property and induction):
      \begin{align*}
        \| \bm{p} \cdot u_0 u_1 \| &\leq \lambda^{-1} \| \bm{p} \cdot u_0 \| + \nu\\
        &\leq \lambda^{-1} \left ( \lambda^{-s} \| \bm{p} \| + \nu \sum_{i = 0}^{s - 1} \lambda^{-i} \right) + \nu\\
        &= \lambda^{-(s + 1)} \| \bm{p} \| + \nu \left( \sum_{i=0}^{s - 1} \lambda^{-(i + 1)} \right) + \nu\\
        &= \lambda^{-(s + 1)} \| \bm{p} \| + \nu \left( \sum_{i=1}^{s} \lambda^{-i)} \right) + \nu
        = \lambda^{-(s + 1)} \| \bm{p} \| + \nu \sum_{i=0}^{(s + 1) - 1} \lambda^{-i}
      \end{align*}
      
      Now, let $N$ be a natural number with
      \[
        \log_\lambda \| \bm{p} \| = \frac{\log \| \bm{p} \|}{\log \lambda} \leq N \leq \frac{\log \| \bm{p} \|}{\log \lambda} + 1
      \]
      (where $\log_\lambda n$ denotes the logarithm of $n$ for basis $\lambda$). We will show by induction that, for all $N \geq s \geq 0$ and $u \in \Sigma^*$ of length $|u| = \ell s$, there is some witness $w \in \Sigma^*$ of length $|w| \leq \ell (N - s) + F$ with $\bm{p} \cdot u \circ w \neq w$ if $\bm{p} \cdot u \neq \idGrp$ in $\mathscr{G}(\mathcal{T})$.
      
      For $s = N$, we obtain from the above:
      \begin{align*}
        \| \bm{p} \cdot u \| &\leq \lambda^{-N} \| \bm{p} \| + \nu \sum_{i = 0}^{N - 1} \lambda^{-i} \leq \left( \lambda^{\log_\lambda \| \bm{p} \|} \right)^{-1} \| \bm{p} \| + \nu \sum_{i \geq 0} \lambda^{-i} \\
        &\leq 1 + \nu \frac{1}{1 - \lambda^{-1}} = L
      \end{align*}
      Thus, there is indeed a witness $w \in \Sigma^*$ of length $|w| \leq F$ with $\bm{p} \cdot u \circ w \neq w$ (by the definition of $F$).
      
      For the inductive step from $s + 1$ to $s < N$, let $|u| = \ell s$. If there is some $w \in \Sigma^*$ of length $|w| = \ell \leq \ell (N - s) + F$ with $\bm{p} \cdot u \circ w \neq w$, we are done. Otherwise, we cannot have $\bm{p} \cdot uw = \idGrp$ in $\mathscr{G}(\mathcal{T})$ for all $w \in \Sigma^*$ of length $|w| = \ell$ (as this would imply $\bm{p} = \idGrp$ in $\mathscr{G}(\mathcal{T})$) and, thus, we may fix such a $w$ with $\bm{p} \cdot uw \neq \idGrp$ in $\mathscr{G}(\mathcal{T})$. Since we have $|uw| = \ell (s + 1)$, we may apply induction and obtain that there is some $w' \in \Sigma^*$ of length $|w'| \leq \ell (N - s - 1) + F$ with $\bm{p} \cdot uw \circ w' \neq w'$. This implies
      \[
        \bm{p} \cdot u \circ w w' = (\bm{p} \cdot u \circ w) (\bm{p} \cdot uw \circ w') \neq w w'
      \]
      for the word $w w'$ of length $|ww'| \leq \ell + \ell (N - s - 1) + F = \ell (N - s) + F$.
      
      At the end of the induction, we obtain from the case $s = 0$ that there is some witness $w \in \Sigma^*$ with $\bm{p} \circ w \neq w$ of length
      \begin{align*}
        |w| &\leq \ell N + F \leq \ell \left( \frac{\log \| \bm{p} \|}{\log \lambda} + 1 \right) + F = \frac{\ell}{\log \lambda} \log \| \bm{p} \| + \ell + F = A \log \| \bm{p} \| + B \\
        &\leq A \log | \bm{p} | + B \text{.}\qedhere
      \end{align*}
    \end{proof}
    
    \paragraph*{Balanced Iterated Commutators.}
    In addition to the normal commutator of two elements, we also need iterated commutators which we recursively split in the middle.
    \begin{definition}[{compare to \cite[Definition~3]{waechter2022automaton}}]\label{def:Bcommutator}
      For words $\bm{q}_0, \dots, \bm{q}_{2D - 1} \in Q^{\pm *}$ where $D = 2^d$ is a power of two, we define $B[\bm{q}_{D - 1}, \dots, \bm{q}_0]$ by induction on $d$ and let
      \begin{align*}
        B[\bm{q}_1] &= \bm{q}_1 \quad \text{and}\\
        B[\bm{q}_{2D - 1}, \dots, \bm{q}_0] &= \big[
          B[\bm{q}_{2D - 1}, \dots, \bm{q}_{D}], \:
          B[\bm{q}_{D - 1}, \dots, \bm{q}_0] \big] \textbf{.}
      \end{align*}
    \end{definition}
    \noindent{}This also immediately yields an operation $B[g_{D - 1}, \dots, g_0]$ for group elements $g_0, \dots, g_{D - 1}$ using the natural evaluation in the group.
  
    The reason for introducing balanced iterated commutators is that we may use them to simulate a $D$-ary logical conjunction in groups. The idea here is that the neutral element $\idGrp$ belongs to $\bot$ and all other elements are considered to belong to $\top$. One direction of the simulation then works in any group as we state in the following fact.\footnote{The fact can be proved using a simple induction on the structure of the balanced iterated commutators, see \cite[Fact~4]{waechter2022automaton}.}
    \begin{fact}[{compare to \cite[Fact~4]{waechter2022automaton}}]\label{fct:commutatorAndNegative}
      Let a group $G$ be generated by the alphabet $Q$ and let $\bm{q}_0, \dots, \bm{q}_{D - 1} \in Q^{\pm *}$ for some $D = 2^d$. If there is some $0 \leq i < D$ with $\bm{q}_i = \idGrp$ in $G$, we have $B[\bm{q}_{D - 1}, \dots, \bm{q}_0] = \idGrp$ in $G$.
    \end{fact}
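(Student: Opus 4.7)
The plan is to prove the fact by induction on $d$, where $D = 2^d$, exactly as suggested in the footnote. The key observations powering the induction are that conjugation preserves the identity (i.e.\ $\idGrp^{\alpha} = \idGrp$ and $\idGrp^{\beta} = \idGrp$ in any group) and that a commutator with the identity in either slot vanishes (i.e.\ $[\idGrp, g] = [g, \idGrp] = \idGrp$). Both are immediate from the definitions $g^h = h^{-1}gh$ and $[h, g] = h^{-1}g^{-1}hg$.

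For the base case $d = 0$, we have $D = 1$ and $B_{\beta, \alpha}[\bm{q}_1] = \bm{q}_1$, so the hypothesis $\bm{q}_1 = \idGrp$ in $G$ gives the conclusion directly. For the inductive step, assume the statement holds for $d - 1 \geq 0$ and consider a witness index $i$ with $\bm{q}_i = \idGrp$ in $G$. Exactly one of the two halves $\{1, \dots, D/2\}$ or $\{D/2 + 1, \dots, D\}$ contains $i$; in that half, the inductive hypothesis applies and shows that the corresponding inner balanced commutator evaluates to $\idGrp$ in $G$. Conjugating by $\alpha$ or $\beta$ respectively preserves this, so one of the two arguments of the outer commutator equals $\idGrp$ in $G$, and therefore the outer commutator itself is $\idGrp$ in $G$.

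There is no real obstacle here: the proof is a clean structural induction whose only subtlety is keeping track of which half of the index range contains the trivial $\bm{q}_i$ and invoking the inductive hypothesis on that half. To keep the argument fully rigorous one should also note that all equalities above take place modulo the projection $\pi : Q^{\pm *} \to G$, so the computations $\pi([\bm{u}, \bm{v}]) = [\pi(\bm{u}), \pi(\bm{v})]$ and $\pi(\bm{u}^{\bm{w}}) = \pi(\bm{u})^{\pi(\bm{w})}$ (already recorded in the preliminaries) allow us to reason about $B_{\beta, \alpha}$ as a group element rather than as a word.
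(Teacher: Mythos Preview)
Your proof is correct and follows exactly the approach the paper indicates: a structural induction on $d$ using that conjugation fixes $\idGrp$ and that any commutator with $\idGrp$ in one slot is trivial. The paper itself does not spell out the argument beyond the footnote pointing to this induction (and the citation), so your write-up is precisely what the paper omits.
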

  
    The reason that we use balanced iterated commutators (instead of the more common ones of the form $\big[ g_{D - 1}, [g_{D - 2}, \dots, g_0] \big]$) is that, this way, the depth remains logarithmic in the number of entries. This allows us to compute the balanced iterated commutator from its entries in logarithmic space.
    \begin{fact}[{compare to \cite[Lemma~7]{waechter2022automaton}}]\label{fct:BcommutatorIsLogSpaceComputable}
      The balanced commutator $B[ \bm{q}_{D - 1}, \dots, \bm{q}_0 ]$ can be computed from $\bm{q}_0, \dots, \bm{q}_{D - 1} \in Q^{\pm *}$ in logarithmic space.
    \end{fact}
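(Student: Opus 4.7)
The plan is to give a deterministic log-space transducer that emits the letters of $B_{\beta, \alpha}[\bm{q}_D, \dots, \bm{q}_1]$ in order on its output tape. The crucial structural observation is that the recursion in \autoref{def:Bcommutator} has depth exactly $d = \log_2 D$, which is logarithmic in the size of the input (since the $D$ sequences $\bm{q}_1, \dots, \bm{q}_D$ are explicitly part of the input). So a direct simulation of the recursion---rather than any explicit precomputation of the long output word---will stay inside logarithmic work-space, provided we use only $O(1)$ bits of persistent state per recursion level.

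Concretely, I would implement a recursive procedure $\textsc{Emit}(l, r, \sigma)$ that writes $\bigl(B_{\beta, \alpha}[\bm{q}_r, \dots, \bm{q}_l]\bigr)^{\sigma}$ onto the output tape, where $\sigma \in \{+1, -1\}$ is a sign flag. If $l = r$, the procedure simply streams $\bm{q}_l$ (or, for $\sigma = -1$, streams it in reverse with each letter formally inverted). Otherwise, it expands the bracket
\[
  [A^\beta, B^\alpha] = \beta^{-1} A^{-1} \beta \; \alpha^{-1} B^{-1} \alpha \; \beta^{-1} A \, \beta \; \alpha^{-1} B \, \alpha
\]
into its twelve pieces and emits them in order. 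Eight of the pieces are copies or formal inverses of $\alpha$ or $\beta$ and can be streamed directly from the input tape; the remaining four pieces trigger a recursive call on one of the two halves of the index range $[l, r]$ (split at the midpoint $m = (l + r - 1)/2$), with the sign flag set according to whether the piece stands for $A$, $A^{-1}$, $B$, or $B^{-1}$. For $\sigma = -1$, the twelve pieces are emitted in reverse order with every sign flipped.

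For the space analysis, each invocation of $\textsc{Emit}$ carries only $O(1)$ bits of per-level state (an index in $\{1, \dots, 12\}$ naming the piece currently being emitted, plus the sign bit), and the recursion depth is bounded by $\log_2 D$. Encoding the recursion stack explicitly on the work tape therefore uses $O(\log D)$ bits. In addition, a single counter of $O(\log)$ bits in the input size is needed for streaming the current piece letter by letter and for naming the current endpoints $l$, $r$ and midpoint $m$. The total is thus logarithmic in the input size. The only technical subtlety I anticipate is bookkeeping the sign flag and the reversal of the twelve-piece order in the $\sigma = -1$ case consistently across the recursion; this is mechanical and does not constitute a real obstacle.
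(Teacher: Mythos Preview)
Your approach is correct. Note, however, that the paper does not actually prove this fact: it is stated with a citation to \cite[Lemma~7]{waechter2022automaton} and only the one-line intuition preceding the statement (\enquote{the depth remains logarithmic in the number of entries}) is offered as justification. Your write-up is precisely the standard fleshing-out of that intuition---simulate the depth-$\log_2 D$ recursion with an explicit stack of $O(1)$-bit frames---and almost certainly coincides with the argument in the cited source. The one point worth making explicit in a final version is how the endpoints $l,r$ are restored on return from a recursive call without storing them per frame (which would cost $O(\log^2 D)$ bits): they can be recomputed from the child interval together with the parent's piece index, since that index records which half was entered.
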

  
    Normally, we cannot simply add balanced iterated commutators to cross diagrams and expect the resulting diagram to still hold. However, this is possible if all the entries act trivially on the input word (which can be seen by a simple induction on the structure of the balanced iterated commutators).
    \begin{fact}[{compare to \cite[Fact~8]{waechter2022automaton}}]\label{fct:commutatorInCrossDiagrams}
      Let $\mathcal{T} = (Q, \Sigma, \delta)$ be a \GAut, $u \in \Sigma^*$, $\bm{q}_0, \dots, \bm{q}_{D - 1} \in Q^{\pm *}$ with $D = 2^d$.  Then the cross diagram
      \begin{center}
        \begin{tikzpicture}[baseline=(m-4-1.base)]
          \matrix[matrix of math nodes, text height=1.25ex, text depth=0.25ex] (m) {
            & u & \\
            \bm{q}_0 &   & \bm{q}_0' \\
            & u & \\
            \vdots  & \vdots  & \vdots \\
            & u & \\
            \bm{q}_{D - 1} &   & \bm{q}_{D - 1}' \\
            & u & \\
          };
          
          \foreach \j in {1, 5} {
            \foreach \i in {1} {
              \draw[->] let
                \n1 = {int(2+\i)},
                \n2 = {int(1+\j)}
              in
                (m-\n2-\i) -> (m-\n2-\n1);
              \draw[->] let
                \n1 = {int(1+\i)},
                \n2 = {int(2+\j)}
              in
                (m-\j-\n1) -> (m-\n2-\n1);
            };
          };
        \end{tikzpicture}
        implies the diagram
        \begin{tikzpicture}[baseline=(m-2-3.base)]
          \matrix[matrix of math nodes, text height=1.25ex, text depth=0.25ex] (m) {
            & u & \\
            B[\bm{q}_{D - 1}, \dots, \bm{q}_0] &   & B[\bm{q}_{D - 1}', \dots, \bm{q}_0'] \\
            & u & \\
          };
          
          \foreach \j in {1} {
            \foreach \i in {1} {
              \draw[->] let
                \n1 = {int(2+\i)},
                \n2 = {int(1+\j)}
              in
               (m-\n2-\i) -> (m-\n2-\n1);
              \draw[->] let
                \n1 = {int(1+\i)},
                \n2 = {int(2+\j)}
              in
                (m-\j-\n1) -> (m-\n2-\n1);
            };
          };
        \end{tikzpicture}.
      \end{center}
    \end{fact}
    
    Another fact worth pointing out here is that (balanced iterated) commutators interact nicely with conjugation (which can be shown using a simple induction and the fact that $[h, g]^k = [h^k, g^k]$ for group elements $g, h, k$):
    \begin{fact}[{compare to \cite[Fact~6]{waechter2022automaton}}]\label{fct:commutatorConjugation}
      Let a group $G$ be generated by the alphabet $Q$ and let $\bm{q}_0, \dots, \bm{q}_{D - 1} \in Q^{\pm *}$ for some $D = 2^d$ as well as $\gamma \in Q^{\pm *}$.
      Then, we have:
      \[
        B[\bm{q}_{D - 1}, \dots, \bm{q}_0]^\gamma = B[\bm{q}_{D - 1}^\gamma, \dots, \bm{q}_0^{\gamma}] \text{ in } G
      \]
    \end{fact}
  
    \paragraph*{Simulating Logical Conjunctions.}
    We have already seen in \autoref{fct:commutatorAndNegative} that the balanced iterated commutator collapses to $\idGrp$ (which corresponds to $\bot$) if one of its entries is equal to $\idGrp$ (i.\,e.\ corresponds to $\bot$). This is one of the two directions to use the commutators as logical conjunctions. The other direction, however, does not hold for all elements of all groups. We will next look at some examples of groups where this approach does work.
    
    The first example we look at is the group $A_5$ of even permutations on the five-element set $\{ 1, \dots, 5 \}$. With regard to using $A_5$ for simulating logical conjunctions, we explicitly mention Barrington's work \cite{barrington89boundedWidth} (but also the predating work in \cite{makanin1984decidability, malcev1962equation, maurer1965property, krohn1966realizing}).
    \begin{figure}\centering
      \resizebox{\linewidth}{!}{
        \begin{tikzpicture}[level distance=1.5cm, auto,
            level 1/.style={sibling distance=8cm},
            level 2/.style={sibling distance=4cm},
            level 3/.style={sibling distance=2cm},
            level 4/.style={sibling distance=1cm}]
          \node[align=center] {{$\varepsilon$}\\\textcolor{gray}{$B[\bm{r}_7, \dots, \bm{r}_0]$}}
            child {node[align=center] {{1}\\\textcolor{gray}{$B[\bm{r}_7, \dots, \bm{r}_4]$}}
              child {node[align=center] {{11}\\\textcolor{gray}{$[\bm{r}_7, \bm{r}_6]$}}
                child {
                  node[align=center] {{111}\\\textcolor{gray}{$\bm{r}_7$}}
                }
                child {
                  node[align=center] {{011}\\\textcolor{gray}{$\bm{r}_6$}}
                }
              }
              child {node[align=center] {{01}\\\textcolor{gray}{$[\bm{r}_5, \bm{r}_4]$}}
                child {
                  node[align=center] {{101}\\\textcolor{gray}{$\bm{r}_5$}}
                }
                child {
                  node[align=center] {{001}\\\textcolor{gray}{$\bm{r}_4$}}
                }
              }
            }
           child {node[align=center] {{0}\\\textcolor{gray}{$B[\bm{r}_3, \dots, \bm{r}_0]$}}
             child {node[align=center] {{10}\\\textcolor{gray}{$[\bm{r}_3, \bm{r}_2]$}}
               child {
                 node[align=center] {{110}\\\textcolor{gray}{$\bm{r}_3$}}
                }
                child {
                  node[align=center] {{010}\\\textcolor{gray}{$\bm{r}_2$}}
                }
              }
              child {node[align=center] {{00}\\\textcolor{gray}{$[\bm{r}_1, \bm{r}_0]$}}
                child {
                  node[align=center] {{100}\\\textcolor{gray}{$\bm{r}_1$}}
                }
                child {
                  node[align=center] {{000}\\\textcolor{gray}{$\bm{r}_0$}}
                }
              }
            };
        \end{tikzpicture}
      }%
      \caption{Labeling the nodes of a regular binary tree with binary numbers in reverse notation with the corresponding commutators}\label{fig:binaryTree}
    \end{figure}
    \begin{example}[The Alternating Group of Degree $5$; {compare to \cite[Example~5]{waechter2022automaton}}]\label{ex:A5}
      In $A_5$ there is a non-identity element which is its own commutator (up to suitable conjugation). Such an element exists since there are two five-cycles in $A_5$ whose commutator is again a five-cycle and since five-cycles are always conjugate (see \cite[Lemma~1 and~3]{barrington89boundedWidth}).\footnote{More generally, $A_5$ is the smallest non-solvable group and its non-solvability is the main reason why this approach works.}
      
      More concretely, we may let $\sigma = (1 3 2 5 4)$, $\alpha = (2 3) (4 5)$ and $\beta = (2 4 5)$ (compare to \cite[Example~5]{waechter2022automaton}). A simple calculation shows that, with this choice, we have
      \[
        \sigma = [ \sigma^\beta, \sigma^\alpha ] \text{.}
      \]
      
      In order to handle the conjugation within the commutator (compare also to \cite[Remark~22]{waechter2022automaton}), we let $\revbin_d \, i$ denote the binary representation of the natural number $0 \leq i < 2^d$ of length (exactly) $d$ in reverse (i.\,e.\ with potentially leading zeros on the right; compare to \autoref{fig:binaryTree} as these numbers will eventually denote the leaves of a regular binary tree). Furthermore, let $f : \{ 0, 1 \}^* \to \{ \alpha, \beta \}^*$ be the homomorphism replacing each $0$ with $\alpha$ and each $1$ with $\beta$ and $\bm{k}_d(i) = f(\revbin_d \, i)$ for $0 \leq i < 2^d$. Note that we have
      \[
        \bm{k}_{d + 1}(i) = \begin{cases}
          \bm{k}_d(i) \alpha & \text{if } 0 \leq i < 2^d \\
          \bm{k}_d(i - 2^d) \beta & \text{if } 2^d \leq i < 2^{d + 1}
        \end{cases}
      \]
      since we have $\revbin_{d + 1} \, i = (\revbin_d \, i) 0$ for $0 \leq i < 2^d$ and $\revbin_{d + 1} \, i = (\revbin_d \, i - 2^d) 1$ for $2^d \leq i < 2^{d + 1}$.
      
      With this notation, we claim, for all $d$, that
      \[
        B[ \sigma^{\bm{k}_d(D - 1)}, \dots, \sigma^{\bm{k}_d(0)} ] = \sigma \text{ in } A_5
      \]
      where $D = 2^d$ and show it by induction. For $d = 0$ (i.\,e.\ $D = 2^0 = 1$), this is obvious as we have $B[ \sigma^\varepsilon ] = \sigma$ and, for the step from $d$ to $d + 1$ (i.\,e.\ from $D$ to $2D$), we have
      \begin{align*}
        &B\!\left[ \sigma^{\bm{k}_{d + 1}(2D - 1)}, \dots, \sigma^{\bm{k}_{d + 1}(0)} \right]\\
        =&
        \left[
          B\!\left[ \sigma^{\bm{k}_{d + 1}(2D - 1)}, \dots, \sigma^{\bm{k}_{d + 1}(D)} \right],
          B\!\left[ \sigma^{\bm{k}_{d + 1}(D - 1)}, \dots, \sigma^{\bm{k}_{d + 1}(0)} \right] 
        \right] \\
        =&
        \left[
          B\!\left[ \sigma^{\bm{k}_{d}(D - 1) \beta}, \dots, \sigma^{\bm{k}_{d}(0) \beta} \right],
          B\!\left[ \sigma^{\bm{k}_{d}(D - 1) \alpha}, \dots, \sigma^{\bm{k}_{d}(0) \alpha} \right] 
        \right] \\
        =&
        \left[
          B\!\left[ \sigma^{\bm{k}_{d}(D - 1)}, \dots, \sigma^{\bm{k}_{d}(0)} \right]^\beta,
          B\!\left[ \sigma^{\bm{k}_{d}(D - 1)}, \dots, \sigma^{\bm{k}_{d}(0)} \right]^\alpha
        \right] \\
        =&
        [ \sigma^\beta, \sigma^\alpha ] = \sigma
      \end{align*}
      in $A_5$ (where we have used \autoref{def:Bcommutator}, \autoref{fct:commutatorConjugation} and induction).
      
      If we fix some $d$ and $D = 2^d$, this allows us to define
      \[
        \bm{r}_{i} = \sigma^{\bm{k}_d(i)} = f(\revbin_d \, i)^{-1} \, \sigma \, f(\revbin_d \, i) \in \{ \sigma, \alpha, \beta \}^{\pm*}
      \]
      for $0 \leq i < D$ with $B[\bm{r}_{D - 1}, \dots, \bm{r}_{0}] = \sigma$ in $A_5$. It will become important later that these words can be computed in logarithmic space if $D$ is given in unary, i.\,e.\ that the function
      \function
        [the elements $\sigma, \alpha$ and $\beta$ of $A_5$]
        {the unary representation of a natural number $D$\newline
         with $D = 2^d$ for some $d$}
        {$\bm{r}_0, \dots, \bm{r}_{D - 1} \in \{ \sigma, \alpha, \beta \}^{\pm*}$ with $B[\bm{r}_{D - 1}, \dots, \bm{r}_{0}] = \sigma \neq \idGrp$ in $A_5$}\noindent
      is $\LogSpace$-computable.
      
      Using these elements, we may now simulate a $D$-ary logical conjunction in the group $A_5$ in the following sense. If we have a list of group elements $g_0, \dots, g_{D - 1} \in A_5$ with either $g_i = \bm{r}_i$ in $A_5$ (this is the case that $g_i$ corresponds to the truth value $\top$) or $g_i = \idGrp$ in $A_5$ (this is the case that $g_i$ corresponds to $\bot$) for all $0 \leq i < 2^d$, then we have in $A_5$
      \[
        B[g_{D - 1}, \dots, g_0] = \begin{cases}
          \sigma & \text{if } g_i = \bm{r}_i \text{ in } A_5 \text{ for all } 0 \leq i < D \\
          \idGrp & \text{otherwise}
        \end{cases}
      \]
      where the first case is the above claim and the second case follows by \autoref{fct:commutatorAndNegative}.
      In other words: if all the entries $g_i$ correspond to the truth value $\top$, the whole commutator also corresponds to $\top$ (in the sense that it is not $\idGrp$); on the other hand, if at least one entry $g_i$ corresponds to $\bot$ (i.\,e.\ is equal to $\idGrp$ in $A_5$), the whole commutator also corresponds to $\bot$ (i.\,e.\ is equal to $\idGrp$ in $A_5$).
      
      The group $A_5$ is finite (more precisely, it has 60 elements) and, thus, a finitary automaton group (as we discussed earlier). In fact, it is generated by the elements $\sigma$, $\alpha$ and $\beta$ (as a simple exhaustive calculation shows), which also shows that it is generated by the finitary \GAut
      \begin{center}
        \begin{tikzpicture}[auto, shorten >=1pt, >=latex, baseline=(b.base), node distance=0.5cm and 7cm]
          \node[state] (a) {$\alpha$};
          \node[state, below=of a] (s) {$\sigma$};
          \node[state, below=of s] (b) {$\beta$};
          \node[state, right=of s] (id) {$\id$};
          
          \path[->] (a) edge node[sloped] {$a_1 / a_{\alpha(1)}, \dots, a_5 / a_{\alpha(5)}$} (id)
                    (s) edge node[sloped, pos=0, above right] {$a_1 / {a_\sigma(1)}, \dots, a_5 / a_{\sigma(5)}$} (id)
                    (b) edge node[sloped, swap] {$a_1 / a_{\beta(1)}, \dots, a_5 / a_{\beta(5)}$} (id)
                    (id) edge[loop right] node[align=center] {$a_1 / a_1$\\\vdots\\$a_5 / a_5$} (id);
        \end{tikzpicture}
      \end{center}
      with alphabet size five (instead of using the general construction stated above to generate finite groups by finitary automata, which would yield an alphabet size of $|A_5| = 60$). Here, it is important to point out that the action of the state $\sigma$ ($\alpha$, $\beta)$ is indeed to apply the permutation $\sigma$ ($\alpha$, $\beta$) on the first letter and then to act as the identity, which justifies the double usage of the name here and allows us to observe that the above results about the commutator(s) also hold if we consider the $\bm{r}_i$ as state sequences over this automaton.
    \end{example}
  
    The group $A_5$ is not the only automaton group where the commutator approach for simulating logical conjunctions works (and where this is also \enquote{efficiently} usable algorithmically). In fact, Bartholdi, Figelius, Lohrey and Weiß introduced the class of {(uniformly) strongly efficiently non-solvable} (SENS) groups for this, which we will look at next.
    \begin{example}[SENS Groups]\label{ex:uSENS}
      A group $G$ finitely generated by $R$ is \emph{uniformly strongly efficiently non-solvable} (uniformly SENS)\footnote{Please note that we have slightly changed the indices in our definition compared to \cite{bartholdi2019groups}.} if there are a constant $\mu \in \mathbb{N}$ and words $\bm{r}_{d,v} \in R^{\pm*}$ for all $d \in \mathbb{N}$, $v \in \{ 0,1 \}^{\leq d}$ such that
      \begin{enumerate}[label=(\alph*), ref=(\alph*)]
        \item\label{SENSa} $|\bm{r}_{d,v}| \leq 2^{\mu d}$ for all $v \in \{ 0,1 \}^{d}$,
        \item\label{SENSb} $\bm{r}_{d,v} = \bigl[ \bm{r}_{d, 1v},\, \bm{r}_{d, 0v} \bigr]$ for all $v \in \{ 0,1 \}^{< d}$ (here we take the commutator of words)\footnote{Compare this to the tree in \autoref{fig:binaryTree}.},
        \item\label{SENSc} $\bm{r}_{d, \varepsilon} \neq \idGrp$ in $G$ and
        \item\label{SENSu} given $v \in \{0,1\}^d$, a positive integer $i$ encoded in binary with $\mu d$ bits, and $a \in R^{\pm 1}$ one can decide in $\ComplexityClass{DLinTime}$ whether the $i^\textnormal{th}$ letter of $\bm{r}_{d,v}$ is $a$. Here, \ComplexityClass{DLinTime} is the class of problems decidable in linear time on a Turing machine with random access to its input.
      \end{enumerate}
      The conditions \ref{SENSa} and \ref{SENSu} together imply that function
      \function
        {the unary representation of a natural number $D$\newline
          with $D = 2^d$ for some $d$}
        {$\bm{r}_{0}, \dots, \bm{r}_{D - 1} \in R^{\pm*}$ with 
         $B[\bm{r}_{D - 1}, \dots, \bm{r}_{0}] \neq \idGrp$ in $G$}\noindent
      is \LogSpace-computable by letting $\bm{r}_i = \bm{r}_{d, \revbin_d\,i}$ (with the notation $\revbin_d\,i$ for the binary representation of length exactly $d$ of $i$ in reverse; compare to \autoref{ex:A5} and \autoref{fig:binaryTree}). The reason for this is that $d$ is logarithmic in the input size $D$ and we may thus store any value $v \in \{ 0, 1 \}^d$. Since the length of $\bm{r}_{d, v}$ for $v \in \{ 0, 1 \}^d$ does not exceed $2^{\mu d}$ (by \ref{SENSa}), we may also store a counter in binary counting up to this length. Using this counter and the $\ComplexityClass{DLinTime}$-algorithm, we may now test which element of $R^{\pm 1}$ should be outputted at each position of $\bm{r}_{v, d}$. This algorithm runs in time $\mu d$ and, thus, in particular cannot require more space than $\mu d$.
      
      The condition \ref{SENSb} simply reflects the inductive definition of our balanced iterated commutators (from \autoref{def:Bcommutator}) and condition \ref{SENSc} then ensures
      \[
        B[ \bm{r}_{D - 1}, \dots, \bm{r}_{0} ] \neq \idGrp \text{ in } G \text{.}
      \]
      
      Now, we can use the same idea as in the case of $A_5$. Suppose we have a list $g_0, \dots, g_{D - 1}$ of group elements from $G$ where $D = 2^d$ such that, for all $0 \leq i < D$, we either have $g_i = \bm{r}_{i}$ in $G$ (corresponding to $\top$) or $g_i = \idGrp$ in $G$ (corresponding to $\bot$). Then, we have $B[ g_{D - 1}, \dots, g_0 ] \neq \idGrp$ in $G$ if we have $g_i = \bm{r_i}$ in $G$ for all $i$ and we have $B[ g_{D - 1}, \dots, g_0 ] = \idGrp$ in $G$ otherwise (i.\,e.\ if $g_i = \idGrp$ in $G$ for some $i$) by \autoref{fct:commutatorAndNegative}.
    \end{example}
    
    With regard to automaton groups, the class of uniformly SENS groups includes not only $A_5$ but also (see \cite{bartholdi2019groups}) the free group of rank three (generated by the Aleshin automaton; see \cite[Example~21]{waechter2022automaton} and the references therein) and Grigorchuk's group (from \autoref{ex:grigorchuk}). These two groups are of particular interest because their generating automata use a binary alphabet.
    This is interesting because any automaton generating $A_5$ must use an alphabet with at least five elements as, in fact, every group generated by an automaton with alphabet size at most four is solvable (and we, thus, cannot directly use the above commutator idea; see, for example, \cite{waechter2024word} for this).
    Since the automaton generating Grigorchuk's group is additionally contracting (we discussed this in \autoref{ex:grigorchuk}), this group will play an important role in our construction later on
    and we will describe how to directly construct the entries of a non-identity balanced iterated commutator in Grigorchuk's group next (based on \cite[Proposition~5.17]{bartholdi2019groups}).
    
    \begin{figure}\centering
      \begin{subfigure}{\dimexpr\linewidth-5cm}\centering
        \begin{tikzpicture}[auto, shorten >=1pt, >=latex, node distance=1cm and 1.5cm]
          \node[state, ellipse] (d+1) {$B_x(d + 1)$};
          \node[state, right=of d+1] (n) {};
          \node[state, right=of n, ellipse, dashed] (d) {$B_x(d)$};
          \node[state, below=of n] (id) {$\id$};
          
          \path[->] (d+1) edge node {$1/1$} (n)
                          edge node[sloped] {$0/0$} (id)
                    (n) edge node {$1/1$} (d)
                        edge node {$0/0$} (id)
                    (id) edge[loop right] node[align=center] {$0/0$\\$1/1$} (id)
                    ;
        \end{tikzpicture}
        \caption{Inductive description of $B_x(d + 1)$ as an automaton}\label{sfig:BxAutomaton}
      \end{subfigure}%
      \begin{subfigure}{5cm}\centering
        \begin{tikzpicture}[level distance=1cm, auto,
          level 1/.style={sibling distance=1cm},
          level 2/.style={sibling distance=1cm}]
          \node[align=center] {$\cdot$}
            child {node {$\idGrp$}}
            child {node {$\cdot$}
              child {node {$\idGrp$}}
              child {node {$B_x(d)$}}
            }
            ;
        \end{tikzpicture}
        \caption{Inductive description of $B_x(d + 1)$ as a tree automorphism}\label{sfig:BxTree}
      \end{subfigure}
    \end{figure}
    \begin{example}[Again: Grigorchuk's Group]\label{ex:grigorchukCommutator}
      Recall Grigorchuk's group and its generating automaton from \autoref{ex:grigorchuk}. Since every state is its own inverse in Grigorchuk's group, the inverse of a state sequence $\bm{p} = p_\ell \dots p_1$ (where $p_1, \dots, p_\ell$ are states) is its reverse as a word, i.\,e.\ $p_1 \dots p_\ell$.
      
      In order to find arbitrarily deep balanced iterated commutators in Grigorchuk's group, we make an inductive definition and let
      \begin{align*}
        B_x(0) &= (abad)^2 = x &
        B_y(0) &= b B_x(0) b = y \\
        B_{\bar{x}}(0) &= (daba)^2 = \bar{x} &
        B_{\bar{y}}(0) &= b B_{\bar{x}}(0) b =\bar{y}\\
      \shortintertext{as well as}
        B_x(d + 1) &= \left[ B_{\bar{x}}(d), B_{\bar{y}}(d) \right] &
        B_y(d + 1) &= \left[ B_y(d), B_x(d) \right] \\
        B_{\bar{x}}(d + 1) &= \left[ B_{\bar{y}}(d), B_{\bar{x}}(d) \right] &
        B_{\bar{y}}(d + 1) &= \left[ B_{x}(d), B_{y}(d) \right] &
      \end{align*}
      for all $d \geq 0$. This inductively describes a (\LogSpace-computable) procedure to eventually obtain the state sequences which may be used as the entries for a balanced iterated commutator that is not equal to $\idGrp$ in the group (e.\,g.\ using $B_x(d)$). Note that these state sequences correspond to the leaves in \autoref{fig:binaryTree} and are either $x, y, \bar{x}$ or $\bar{y}$. This commutator can then be used as a logical conjunction in the same way as in \autoref{ex:uSENS} (and \autoref{ex:A5}).
      
      That the commutators $B_x(d)$, $B_y(d)$, $B_{\bar{x}}(d)$ and $B_{\bar{y}}(d)$ are all different to $\idGrp$ in the group may be seen using an induction on $d$ (and some calculations in the group), which we will not explicitly demonstrate here. Instead, we refer the reader to \cite[Proposition~5.17]{bartholdi2019groups} for details and only mention that the idea for the induction is that $B_x(d)$ is \enquote{almost} equal to $x$ in the group (with analogous statements for $y, \bar{x}$ and $\bar{y}$, respectively). By this, we mean that $B_x(d + 1)$ may (loosely) be described by the automaton in \autoref{sfig:BxAutomaton}
      where the state $B_x(d)$ is given inductively
      or, for readers familiar with the interpretation of elements of automaton groups as tree automorphisms, by the automorphism of the infinite binary regular tree depicted in \autoref{sfig:BxTree} where the subtrees on the first two levels are not permuted.
    \end{example}
  \end{section}

  \begin{section}{The Word Problem}
    The uniform word problem for finitary automaton groups is the decision problem
    \problem{
      a finitary \GAut $\mathcal{T} = (Q, \Sigma, \delta)$ and\newline
      a state sequence $\bm{q} \in Q^{\pm *}$
    }{
      is $\bm{q} = \idGrp$ in $\mathscr{G}(\mathcal{T})$?
    }\noindent
    In this section, we will show that it is \coNP-complete and even that it remains so if we limit the alphabet size to $|\Sigma| = 2$.
    We start with the easier part and first show that the problem is in \coNP.
    
    \begin{proposition}\label{prop:finitaryWPisInCoNP}
      The uniform word problem for finitary automaton groups is in \coNP.
    \end{proposition}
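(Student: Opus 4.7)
The plan is to show that the complement, i.\,e.\ deciding whether $\bm{q} \neq \idGrp$ in $\mathscr{G}(\mathcal{T})$, lies in $\NP$. The central observation is that the depth $d$ of a finitary \GAut $\mathcal{T} = (Q, \Sigma, \delta)$ satisfies $d \leq |Q|$: since, apart from the self-loops at the identity state, the state graph of $\mathcal{T}$ is acyclic, any path reaches the identity state within at most $|Q|$ transitions. Consequently, every state sequence (in particular every state sequence appearing as a suffix of a row of a cross diagram starting from $\bm{q}$) acts trivially on all letters in positions beyond $d$.

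From this depth bound I obtain a short-witness property: if $\bm{q} \neq \idGrp$ in $\mathscr{G}(\mathcal{T})$, then there exists some $u \in \Sigma^{\leq |Q|}$ with $\bm{q} \circ u \neq u$. Indeed, any longer witness can be truncated to its first $d$ letters without losing the non-triviality, because positions beyond $d$ are fixed by $\bm{q}$ letter-by-letter.

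The resulting $\NP$ algorithm then follows the standard guess-and-check template: non-deterministically guess a word $u \in \Sigma^{\leq |Q|}$ of polynomial size, compute $\bm{q} \circ u$ by constructing the corresponding cross diagram (requiring $|\bm{q}| \cdot |u|$ transition lookups and hence polynomial time), and accept iff $\bm{q} \circ u \neq u$. This places the complement problem in $\NP$ and therefore the word problem itself in $\coNP$.

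No serious obstacle is anticipated here; the entire argument hinges on the polynomial bound on the depth together with the deterministic, letter-to-letter nature of the transducer, both of which are immediate from the definition of a finitary \GAut. The only point that deserves a brief justification in the formal proof is the truncation of witnesses, which follows directly from the definition of the depth.
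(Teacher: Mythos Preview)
Your proposal is correct and follows essentially the same approach as the paper: both show the complement lies in \NP by guessing a short witness $u$ whose length is bounded by the depth of $\mathcal{T}$ (hence by $|Q|$) and then verifying $\bm{q} \circ u \neq u$ in time $|\bm{q}| \cdot |u|$. Your write-up is slightly more explicit about why the depth is bounded by $|Q|$ and about the truncation of longer witnesses, but the argument is the same.
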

    \begin{proof}
      We solve the complement of the problem by a guess and check approach in \NP. First, we guess a witness $u$ on which $\bm{q}$ acts non-trivially. The length of a shortest such witness is at most the depth of the automaton $\mathcal{T}$, which, in turn, is bounded by the size of $\mathcal{T}$. Thus, the witness can be guessed in linear time.
      
      Then, we compute $u_i = q_i \dots q_1 \circ u$ for $\bm{q} = q_\ell \dots q_1$ (with $q_i \in Q^{\pm 1}$, $1 \leq i \leq \ell$) state by state. This requires time $|\bm{q}| \cdot |u|$ and is, thus, certainly possible in polynomial time.
    \end{proof}
    
    \paragraph*{$\mathcal{R}$-Finitary Automata.}
    For the other direction, we show a stronger result, namely that the problem remains \coNP-hard if we restrict ourselves to a binary input/output alphabet for the finitary automata. Here, it is convenient to first define a notion that is very similar to being finitary: instead of considering automata where we always reach an/the identity state after reading sufficiently many letters, we may consider the more general notion where we reach a specific subautomaton instead.
    \begin{definition}
      Let $\mathcal{T} = (Q, \Sigma, \delta)$ and $\mathcal{R} = (R, \Gamma, \varrho)$ be \GAuta. We say that $\mathcal{T}$ is \emph{$\mathcal{R}$-finitary} if
      \begin{enumerate}
        \item $\mathcal{T}$ and $\mathcal{R}$ have the same alphabet (i.\,e.\ $\Sigma = \Gamma$),
        \item $\mathcal{R}$ is a subautomaton of $\mathcal{T}$ (more precisely: $R \subseteq Q$ and $\varrho = \delta \cap R \times \Gamma \times \Gamma \times R$) and
        \item there is a constant $d$ such that, for all $u \in \Sigma^*$ with $|u| \geq d$ and all state sequences $\bm{q} \in Q^{\pm *}$, we have $\bm{q} \cdot u \in R^{\pm *}$.
      \end{enumerate}
      The minimal such $d$ is called the \emph{$\mathcal{R}$-depth} of $\mathcal{T}$.
    \end{definition}
    \begin{remark}
      Note that, in an $\mathcal{R}$-finitary \GAut $\mathcal{T}$, it is impossible to leave the subautomaton $\mathcal{R}$ by reading any word since its alphabet is the same as that of $\mathcal{T}$ (and $\mathcal{R}$ must be deterministic and complete). This show that the $\mathcal{R}$-depth of $\mathcal{T}$ is always bounded by its size (actually we may even subtract the size of $\mathcal{R}$).
      
      We may visualize an $\mathcal{R}$-finitary \GAut $\mathcal{T}$ as an automaton with two parts: there is the subautomaton $\mathcal{R}$ and then there may also be a part of transitions that are directed towards this subautomaton.
    \end{remark}

    \begin{remark}
      Clearly, a \GAut $\mathcal{T} = (Q, \Sigma, \delta)$ is finitary if and only if it is $\mathcal{E}$-finitary for the \GAut $\mathcal{E} = (\{ \id \}, \Sigma, \{ \trans{\id}{a}{a}{\id} \mid a \in \Sigma \})$ and the $\mathcal{E}$-depth is exactly the depth of $\mathcal{T}$ as a finitary automaton.
      
      Additionally, if $\mathcal{T}$ is $\mathcal{S}$-finitary and $\mathcal{S}$ itself is $\mathcal{R}$-finitary, this implies that $\mathcal{T}$ is also $\mathcal{R}$-finitary. In particular, we have that an $\mathcal{R}$-finitary \GAut is finitary if $\mathcal{R}$ is.
    \end{remark}
    
    Being finitary (or, more generally, the activity; see e.\,g.\ \cite{waechter2024word}) is not the only property an $\mathcal{R}$-finitary \GAut inherits from $\mathcal{R}$. Most interesting to us will be later that, if $\mathcal{R}$ is contracting, we still have witnesses of logarithmic length to prove that a state sequence acts non-trivially (compare  \autoref{fct:contractingLogarithmicWitness}).
    \begin{fact}\label{lem:finitarilyContracting}
      Let $\mathcal{R}$ be a contracting \GAut. Then, there are constants $A$ and $B$ such that, for every $\mathcal{R}$-finitary \GAut $\mathcal{T} = (Q, \Sigma, \delta)$ and every state sequence $\bm{q} \in Q^{\pm *}$ with $\bm{q} \neq \idGrp$ in $\mathscr{G}(\mathcal{T})$, there is a witness $w \in \Sigma^*$ of length $|w| \leq |Q| + A \log | \bm{q} | + B$ with $\bm{q} \circ w \neq w$.
    \end{fact}
    \begin{proof}
      Let $\mathcal{R} = (R, \Sigma, \varrho)$ and let $A$ and $B$ be the constants from \autoref{fct:contractingLogarithmicWitness} with respect to $\mathcal{R}$.
      Consider some state sequence $\bm{q} \in Q^{\pm *}$ with $\bm{q} \neq_{\mathcal{T}} \varepsilon$. We are done if there is some $w \in \Sigma^*$ with $|w| \leq |Q|$ and $\bm{q} \circ w \neq w$. Therefore, assume that we have $\bm{q} \circ w = w$ for all $w \in \Sigma^*$ of length $|w| = |Q|$. We cannot have $\bm{q} \cdot w =_{\mathcal{T}} \varepsilon$ for all of them (since this would imply $\bm{q} =_{\mathcal{T}} \varepsilon$), so we may fix some $w_0 \in \Sigma^*$ with $\bm{q} \cdot w_0 \neq_{\mathcal{T}} \varepsilon$ and $|w_0| = |Q|$. Since the $\mathcal{R}$-depth of $\mathcal{T}$ is bounded by $|Q|$, we have $\bm{q} \cdot w_0 = \bm{r} \in R^{\pm *}$, which yields the (black part of the) cross diagram
      \begin{center}
        \begin{tikzpicture}[baseline=(m-2-1.base)]
          \matrix[matrix of math nodes, text height=1.25ex, text depth=0.25ex] (m) {
              & w_0 && \textcolor{gray}{w_1} & \\
            \bm{q} && \bm{r} && {} \\
              & w_0 && \textcolor{gray}{\neq w_1} &\\
          };
          \draw[->] (m-2-1) -> (m-2-3);
          \draw[->] (m-1-2) -> (m-3-2);
          
          \draw[->, gray] (m-2-3) -> (m-2-5);
          \draw[->, gray] (m-1-4) -> (m-3-4);
        \end{tikzpicture}
      \end{center}
      and (by \autoref{fct:contractingLogarithmicWitness}) a witness $w_1 \in \Sigma^*$ of length $|w_1| \leq A \log | \bm{r} | + B = A \log |\bm{q}| + B$ with $\bm{r} \circ w_1 \neq w_1$, which yields the gray additions to the above diagram (where $\neq w_1$ is some word different to $w_1$ and the state sequence on the right is not of interest) and also $w_0 w_1$ as the witness of length $|w_0 w_1| \leq |Q| + A \log |\bm{q}| + B$ for $\bm{q} \circ w_0 w_1 \neq w_0 w_1$.
    \end{proof}
    
    \paragraph*{The Main Reduction.}
    With the definition of an $\mathcal{R}$-finitary automaton at hand, we can now proceed to the main part of our proof where we reduce the satisfiability problem for boolean formulae to the uniform word problem for families of certain $\mathcal{R}$-finitary \GAuta.
    \begin{theorem}\label{thm:coNPHardFamily}
      Let $\mathcal{R}$ be a \GAut with a state $\id$ acting like the identity for which the function
      \function
        [the \GAut $\mathcal{R} = (R, \Sigma, \varrho)$]
        {the unary representation of a natural number $D$\newline
          with $D = 2^d$ for some $d$}
        {$\bm{r}_1, \dots, \bm{r}_D$ with $B[\bm{r}_D, \dots, \bm{r}_1] \neq \idGrp$ in $\mathscr{G}(\mathcal{R})$}\noindent
      is $\LogSpace$-computable.
      Then, the uniform word problem for the family of $\mathcal{R}$-finitary \GAuta
      \problem
        {an $\mathcal{R}$-finitary \GAut $\mathcal{T} = (Q, \Sigma, \delta)$ and\newline
         a state sequence $\bm{q} \in Q^{\pm*}$}
        {is $\bm{q} = \idGrp$ in $\mathscr{G}(\mathcal{T})$?}\noindent
      is \coNP-hard (under many-one $\LogSpace$-reductions).
    \end{theorem}

    \begin{proof}
      First, observe that any \GAut over a single element alphabet can only generate the trivial group (which yields $B[\bm{r}_D, \dots, \bm{r}_1] = \idGrp$ in $\mathscr{G}(\mathcal{R})$ for all $D$). Therefore, we may assume $|\Sigma| \geq 2$.
      
      \pagebreak
      We reduce the \NP-hard\footnote{This is a well-known problem from Karp's list of \NP-complete problem, see e.\,g.\ \cite[Problem~9.5.5]{papadimitriou97computational}.} satisfiability problem for boolean formulae
      \problem{
        a boolean formula $\varphi$ in $3$-conjunctive normal form
      }{
        is $\varphi$ satisfiable?
      }\noindent
      to the complement of the stated problem by using a many-one \LogSpace-reduction. In other words, we need to map (in logarithmic space) a boolean formula $\varphi$ in $3$-conjunctive normal form over a set of variables $\mathbb{X} = \{ x_1, \dots, x_N \}$ to an $\mathcal{R}$-finitary \GAut $\mathcal{T}$ and a state sequence $\bm{q}$ such that $\bm{q}$ does \textbf{not} act as the identity if and only if $\varphi$ is satisfiable.

      As $\varphi$ is in $3$-conjunctive normal form, we may write $\varphi = \bigwedge_{k = 1}^K C_k$ where every clause $C_k$ contains exactly three distinct literals over $\mathbb{X}$. Without loss of generality, we may assume that no clause contains the same variable as a positive and a negative literal (as such clauses are satisfied by all assignments and can, thus, be dropped). In other words, we have $C_k = (\lnot) x_{n_1} \lor (\lnot) x_{n_2} \lor (\lnot) x_{n_3}$ for three pairwise distinct $n_1$, $n_2$ and $n_3$ with $1 \leq n_1 < n_2 < n_3 \leq N$. Additionally, we may assume that the number of clauses $K$ is a power of two. We may do this since we can easily just repeat one of the clauses and only have to count up to the next power of two (which can be done in logarithmic space).
      
      The alphabet of $\mathcal{T}$ must be $\Sigma$ (since it needs to be $\mathcal{R}$-finitary) and we identify two arbitrary letters with $\bot$ and $\top$, respectively. This allows us to encode an assignment $\mathcal{A}: \mathbb{X} \to \mathbb{B}$ as the word $\langle \mathcal{A} \rangle = \mathcal{A}(x_N) \dots \mathcal{A}(x_1)$ of length $N$.\footnote{Note that the right-most letter here corresponds to the first variable $x_1$. We could have done this the other way round as well but it turns out that this numbering has some technical advantages.} Note that a word $w \in \Sigma^*$ of length $N$ encodes an assignment (i.\,e.\ $w = \langle \mathcal{A} \rangle$ for some assignment $\mathcal{A}$) if and only if $w \in \{ \bot, \top \}^*$.
      
      The general idea is now that we check for every clause $C_k$ whether the first $N$ letters of the input form an encoding of an assignment satisfying $C_k$. If this is not the case (i.\,e.\ if a letter different to $\bot$ and $\top$ appears or if the encoded assignment does not satisfy $C_k$), we will go into an identity state, which can be thought of as a \enquote{fail} state. Otherwise, we will end up in the state sequence belonging to the $k$-th entry in the balanced iterated commutator (from \autoref{def:Bcommutator}) $B[\bm{r}_K, \dots, \bm{r}_1] \neq \idGrp$ in $\mathscr{G}(\mathcal{R})$. This allows us to finally use the balanced commutator to make a conjunction of all these checks.
      
      By hypothesis, we may compute the entries $\bm{r}_1, \dots, \bm{r}_K$ for the balanced iterated commutator $B[\bm{r}_K, \dots, \bm{r}_1] \neq \idGrp$ in $\mathscr{G}(\mathcal{R})$ in \LogSpace. Since \LogSpace-computable functions are closed under composition, we may assume that the $\bm{r}_k$ are already part of the input (alongside the formula $\varphi$) for our reduction.
      
      We will give a precise definition of the automaton $\mathcal{T} = (Q, \Sigma, \delta)$ by describing various parts.
      
      First, we need $\mathcal{R} = (R, \Sigma, \varrho)$ as a subautomaton of $\mathcal{T}$ (i.\,e.\ $R \subseteq Q$, $\varrho \subseteq \delta$), so we output this part first. This already yields the identity state $\id \in Q$ (with the transitions $\{ \trans{\id}{a}{a}{\id} \mid a \in \Sigma \} \subseteq \delta$). In order to simplify out notation later on, we identify every state $r \in R$ with $r_0 = r \in R \subseteq Q$.
      
      \begin{figure}\centering
        \begin{tikzpicture}[auto, shorten >=1pt, >=latex]
          
          \node[state] (N) {$r_N$};
          \node[state, ellipse, inner sep=0pt, right=of N] (N-1) {$r_{N - 1}$};
          \node[right=of N-1] (dots) {$\dots$};
          \node[state, right=of dots, dotted] (0) {$r_{0}$};
          \node[state, below=1cm of dots, dotted] (id) {$\id$};
          
          \draw[->] (N) edge node[above] {$\bot / \bot$} node[below] {$\top / \top$} (N-1)
                    (N) edge[bend right] node[sloped] {$b / b$} (id)
                    (N-1) edge node[above] {$\bot / \bot$} node[below] {$\top / \top$} (dots)
                    (N-1) edge[bend right, out=-45] node[sloped] {$b / b$} (id)
                    (dots) edge node[above] {$\bot / \bot$} node[below] {$\top / \top$} (0)
                    (dots) edge node[sloped, swap] {$b / b$} (id)
                    ;
          \draw[->, dotted] (id) edge[loop right] node {$a / a$} (id);
        \end{tikzpicture}
        \caption{The automaton part for the states $\{ r_n \mid 0 < n \leq N \}$. The dotted states are already defined by the subautomaton $\mathcal{R}$ and the transitions exist for all $a \in \Sigma$ and $b \in \Sigma \setminus \{ \bot, \top \}$.}\label{fig:sigmaPart}
      \end{figure}
      Then, for every $r \in R$, we define a state $r_N$ that check whether the first $N$ letters are either $\bot$ or $\top$ and, if this is the case, goes to $r = r_0$ afterwards (i.\,e.\ it acts like $r$ starting from the $(N + 1)$-th letter). Otherwise, it will go to the identity state as a \enquote{fail} state. For this, we use the states $\{ r_n \mid 0 < n \leq N, r \in R \} \subseteq Q$ together with the transitions
      \begin{align*}
        &\left\{
          \trans{r_n}{\bot}{\bot}{r_{n - 1}},
          \trans{r_n}{\top}{\top}{r_{n - 1}},
          \trans{r_n}{b}{b}{\id} \mid r \in R, 0 < n \leq N, b \in \Sigma \setminus \{ \bot, \top \} \right\}
        \subseteq \delta \text{.}
      \end{align*}
      See \autoref{fig:sigmaPart} for a graphical representation. By construction, we obtain for all $0 \leq n \leq N$ the cross diagram 
      \begin{equation}
        \begin{tikzpicture}[baseline=(m-2-1.base)]
          \matrix[matrix of math nodes, text height=1.25ex, text depth=0.25ex, ampersand replacement=\&] (m) {
                     \& w \& \\
            r_n \&   \& |[align=left]| $\begin{cases}
              r_0 & \text{if } w \in \{ \bot, \top \}^* \\
              \id & \text{otherwise}
            \end{cases}$ \\
                     \& w \& \\
          };
          \foreach \j in {1} {
            \foreach \i in {1} {
              \draw[->] let
                \n1 = {int(2+\i)},
                \n2 = {int(1+\j)}
              in
                (m-\n2-\i) -> (m-\n2-\n1);
              \draw[->] let
                \n1 = {int(1+\i)},
                \n2 = {int(2+\j)}
              in
                (m-\j-\n1) -> (m-\n2-\n1);
            };
          };
        \end{tikzpicture}\label{eqn:sigmaCrossDiagram}
      \end{equation}
      for all $w \in \Sigma^*$ of length $n$. Recall that, for a word $w \in \Sigma^*$ of length $N$, we have $w = \langle \mathcal{A} \rangle$ for some assignment $\mathcal{A}$ if and only if $w \in \{ \bot, \top \}^*$ (i.\,e.\ if we are in the upper case in the above diagram).
      We have, in particular, that $r_N$ does not change the first $N$ letters.
      Note that this part is $\mathcal{R}$-finitary (with $\mathcal{R}$-depth $N$) and may be computed in logarithmic space (as we only need to count up to the value $N$ in binary).
      
      Most interesting are those parts of the automaton which are used to verify whether a clause is satisfied. In order to describe these parts, consider the clause $C_k = L_1 \lor L_2 \lor L_3$ for all $1 \leq k \leq K$ where $L_i$ for $i \in \{ 1, 2, 3 \}$ is either a positive or a negative literal of a variable $x_{n_i}$. Without loss of generality, we may assume $1 \leq n_1 < n_2 < n_3 \leq N$ and we say that $x_{n}$ appears \emph{positively} in $C_k$ if $L_i = x_{n}$ and it appears \emph{negatively} in $C_k$ if $L_i = \lnot x_{n}$ (for some $i \in \{ 1, 2, 3 \}$). If a variable appears neither positively nor negatively, we say that it does \emph{not} appear in $C_k$.
      
      \begin{figure}\centering
        \resizebox{\linewidth}{!}{
          \begin{tikzpicture}[auto, shorten >=1pt, >=latex, baseline=(ck.base), node distance=1.25cm and 0.75cm]
            \node[state, ellipse, inner sep=0pt] (ck) {$c_{k, r, N}$};
            \node[right=of ck] (dots) {$\dots$};
            \node[state, ellipse, inner sep=0pt, right=of dots] (u0) {$c_{k, r, n_3}$};
            \node[state, ellipse, inner sep=0pt, right=1.25cm of u0] (u1) {$c_{k, r, n_3 - 1}$};
            \node[state, ellipse, inner sep=0pt, below=of u1, dashed] (l1) {$r_{n_3 - 1}$};
            \node[right=of u1] (udots1) {$\dots$};
            \node[anchor=base] at ($(l1.base-|udots1.base)$) (ldots1) {$\dots$};
            \node[state, ellipse, inner sep=0pt, right=of udots1] (u2) {$c_{k, r, n_2}$};
            \node[state, dashed, anchor=base] at ($(l1.base-|u2.base)$) (l2) {$r_{n_2}$};
            \node[state, ellipse, inner sep=0pt, right=1.25cm of u2] (u3) {$c_{k, r, n_2 - 1}$};
            \node[state, ellipse, inner sep=0pt, dashed, anchor=base] at ($(l1.base-|u3.base)$) (l3) {$r_{n_2 - 1}$};
            \node[right=of u3] (udots3) {$\dots$};
            \node[anchor=base] at ($(l1.base-|udots3.base)$) (ldots3) {$\dots$};
            \node[state, right=of udots3, dotted] (id) {$\id$};
            \node[state, right=of ldots3, dotted] (sigma) {$r_0$};
            
            \draw[->] (ck) edge node[above] {$\bot/\bot$} node[below] {$\top/\top$} (dots)
                      (dots) edge node[above] {$\bot/\bot$} node[below] {$\top/\top$} (u0)
                      (u0) edge[thick] node[above] {$\bm{\bot/\bot}$} (u1)
                           edge[thick] node[sloped, anchor=center, below] {$\bm{\top/\top}$} (l1)
                      (u1) edge node[above] {$\bot/\bot$} node[below] {$\top/\top$} (udots1)
                      (udots1) edge node[above] {$\bot/\bot$} node[below] {$\top/\top$} (u2)
                      (u2) edge[thick] node[above] {$\bm{\top/\top}$} (u3)
                           edge[thick] node[sloped, anchor=center, above, pos=0.6, yshift=-0.25ex, xshift=-0.25ex] {$\bm{\bot/\bot$}} (l3)
                      (u3) edge node[above] {$\bot/\bot$} node[below] {$\top/\top$} (udots3)
                      (udots3) edge node[above] {$\bot/\bot$} node[below] {$\top/\top$} (id)
            ;
            \draw[->, dashed]
                      (l1) edge node[above] {$\bot/\bot$} node[below] {$\top/\top$} (ldots1)
                      (ldots1) edge node[above] {$\bot/\bot$} node[below] {$\top/\top$} (l2)
                      (l2) edge node[above] {$\bot/\bot$} node[below] {$\top/\top$} (l3)
                      (l3) edge node[above] {$\bot/\bot$} node[below] {$\top/\top$} (ldots3)
                      (ldots3) edge node[above] {$\bot/\bot$} node[below] {$\top/\top$} (sigma)
            ;
            
            \node[anchor=south] at ($(u0.north)!0.5!(u1.north)$) (Xi1) {$x_{n_3}$};  
            \path[fill=gray, opacity=0.2, rounded corners] ($(u0.east |- Xi1.north)-(1ex,-0.5ex)$) rectangle ($(l1.south -| l1.west)+(1ex,-0.5ex)$);
            
            \node[anchor=south] at ($(u2.north)!0.5!(u3.north)$) (Xi2) {$x_{n_2}$};  
            \path[fill=gray, opacity=0.2, rounded corners] ($(u2.east |- Xi2.north)-(1ex,-0.5ex)$) rectangle ($(l3.south -| l3.west)+(1ex,-0.5ex)$);
          \end{tikzpicture}
        }
        \caption{Part of the automaton for the states $\{ c_{k, r, n} \mid 0 < n \leq N \}$ (with $r \in R$, $1 \leq k \leq K$). We assume $x_{n_3}$ to appear positively in $C_k$ while $x_{n_2}$ is assumed to appear negatively. The part for $x_{n_1}$ is not drawn for space reasons. Dashed states and transitions are already defined above, dotted states are part of the subautomaton $\mathcal{R}$. Missing transitions are of the form $b/b$ and go to $\id$ (for $b \in \Sigma \setminus \{ \bot, \top \}$).}\label{fig:cknPart}
      \end{figure}
      Now, in order to verify that the clause $C_k$ is satisfied, we use the states $\{ c_{k, r, n} \mid r \in R, 0 < n \leq N \} \subseteq Q$ with the transitions
      \begin{align*}
        &\{ \trans{c_{k, r, n}}{\bot}{\bot}{c_{k, r, n - 1}}, \trans{c_{k, r, n}}{\top}{\top}{c_{k, r, n - 1}}
          \mid
            \begin{aligned}[t]
              &r \in R, 0 < n \leq N,\\[-0.75ex]
              &x_n \text{ does not appear in } C_k \}
            \end{aligned}
           \\
        {}\cup{}& \{ \trans{c_{k, r, n}}{\bot}{\bot}{c_{k, r, n - 1}}, \trans{c_{k, r, n}}{\top}{\top}{r_{n - 1}}
          \mid
          \begin{aligned}[t]
            &r \in R, 0 < n \leq N,\\[-0.75ex]
            &x_n \text{ appears positively in } C_k \}
          \end{aligned}
          \\
        {}\cup{}& \{ \trans{c_{k, r, n}}{\bot}{\bot}{r_{n - 1}}, \trans{c_{k, r, n}}{\top}{\top}{c_{k, n - 1}}
          \mid
          \begin{aligned}[t]
            &r \in R, 0 < n \leq N,\\[-0.75ex]
            &x_n \text{ appears negatively in } C_k \}
          \end{aligned}
          \\
        {}\cup{}& \{ \trans{c_{k, r, n}}{b}{b}{\id}
          \mid r \in R, 0 < n \leq N, b \in \Sigma \setminus \{ \bot, \top \} \}
        \subseteq \delta
      \end{align*}
      where we identify $c_{k, r, 0}$ with the identity state $\id$.
      This results in the automaton part schematically depicted in \autoref{fig:cknPart}.

      The reader may verify that we obtain the black part of the cross diagram
      \begin{equation}
        \begin{tikzpicture}[baseline=(m-2-1.base)]
          \matrix[matrix of math nodes, text height=1.25ex, text depth=0.25ex, ampersand replacement=\&] (m) {
                     \& u \& \\
            c_{k, r, N}^{\textcolor{gray}{-1}} \&   \& |[align=left]| $\begin{cases}
              r_0^{\textcolor{gray}{-1}} & \text{if } u = \langle \mathcal{A} \rangle \text{ such that } \mathcal{A} \text{ satisfies } C_k\\
              \id & \text{otherwise}
            \end{cases}$ \\
                     \& u \& \\
          };
          \foreach \j in {1} {
            \foreach \i in {1} {
              \draw[->] let
                \n1 = {int(2+\i)},
                \n2 = {int(1+\j)}
              in
                (m-\n2-\i) -> (m-\n2-\n1);
              \draw[->] let
                \n1 = {int(1+\i)},
                \n2 = {int(2+\j)}
              in
                (m-\j-\n1) -> (m-\n2-\n1);
            };
          };
        \end{tikzpicture}\label{eqn:ckNCrossDiagram}
      \end{equation}
      and the gray additions with added inverses
      for all $u \in \Sigma^*$ of length $N$, $r \in R$ and all $1 \leq k \leq K$ by construction of the automaton. Note here that the \enquote{otherwise} case occurs if $w$ contains a letter different to $\bot$ and $\top$ (i.\,e.\ it does not encode an assignment) and if $w$ encodes an assignment which does not satisfy $C_k$. Also note that this part is also $\mathcal{R}$-finitary (we are in $\mathcal{R}$ after at most $N$ many letters) and that we may compute it in logarithmic space (since we again only need to count up to $N$).
      
      This concludes the definition of $\mathcal{T}$ and it remains to define $\bm{q}$. For this, recall that we assumed without loss of generality that $K$ is a power of two. 
      To actually define $\bm{q}$, we will use the balanced commutator from \autoref{def:Bcommutator}.
      For this, let $c_k(r) = c_{k, r, N}$ and $c_k(r^{-1}) = c_{k, r, N}^{-1}$ for all $r \in R$ and $1 \leq k \leq K$ and extend it by letting $c_k(\bm{r}) = c_k(r_\ell) \dots c_k(r_1)$ for all $\bm{r} = r_\ell \dots r_1$ with $r_1, \dots, r_\ell \in R^{\pm 1}$ (which is just a substitution of letters and, thus, \LogSpace-computable).
      
      This allows us to define $\bm{q} \in Q^{\pm*}$ as
      \[
        \bm{q} = B \left[ c_K(\bm{r}_K), \dots, c_1(\bm{r}_1) \right] \text{.}
      \]
      Please note that $\bm{q}$ may be computed in logarithmic space by \autoref{fct:BcommutatorIsLogSpaceComputable}.

      This concludes the definition of the reduction function and it remains to show $\bm{q} \neq_\mathcal{T} \id$ if and only if $\varphi$ is satisfiable.
      We start by looking at how $\bm{q}$ acts on a word $u \in \Sigma^*$ of length $N$. From the cross diagrams (\ref{eqn:ckNCrossDiagram}) (and the definition of $c_k(\bm{r})$), we obtain the black part of the cross diagram
      \begin{equation}
        \begin{tikzpicture}[baseline=(q.base), auto]
          \matrix[matrix of math nodes, ampersand replacement=\&,
          text height=1.75ex, text depth=0.25ex] (m) {
                    \& u \& \\
            c_1(\bm{r}_1) \&   \& \bm{q}_1 \\
                    \& u \& \\
             \vdots \& \vdots \& \vdots \\
                    \& u \& \\
            c_K(\bm{r}_K) \&   \& \bm{q}_K \\
                    \& u \& \\
          };
          \foreach \j in {1,5} {
            \foreach \i in {1} {
              \draw[->] let
               \n1 = {int(2+\i)},
                \n2 = {int(1+\j)}
              in
                (m-\n2-\i) -> (m-\n2-\n1);
              \draw[->] let
                \n1 = {int(1+\i)},
                \n2 = {int(2+\j)}
              in
                (m-\j-\n1) -> (m-\n2-\n1);
            };
          };
          
          \node[gray, rotate=90, below=0pt of m-6-1.south, anchor=east, inner sep=0pt] (B0) {$B[$};
          \node[gray, rotate=90, above=0pt of m-2-1.north, anchor=east, inner sep=0pt] (B0c) {$]$};
          \foreach \j in {2,4} {
            \path let
              \n1 = {int(2+\j)}
            in
              node[gray, rotate=90, anchor=base] at ($(m-\j-1)!0.5!(m-\n1-1)$ |- B0.base) {$,$};
          };
          \node[gray, rotate=90, below=0pt of m-6-3.south, anchor=east, inner sep=0pt] (B) {$B[$};
          \node[gray, rotate=90, above=0pt of m-2-3.north, anchor=east, inner sep=0pt] (Bc) {$]$};
          \foreach \j in {2,4} {
            \path let
              \n1 = {int(2+\j)}
            in
              node[gray, rotate=90, anchor=base] at ($(m-\j-3)!0.5!(m-\n1-3)$ |- B.base) {$,$};
          };
          
          \draw[gray, decorate, decoration={brace}] ($(B0.west-|m-6-1.west)+(0pt,0pt)$) -- node (q) {$\bm{q} ={}$} ($(B0c.east-|m-6-1.west)+(0pt,0pt)$);
          
          \draw[gray, decorate, decoration={brace}] ($(Bc.east-|m-6-3.east)+(0pt,0pt)$) -- node {${}= \bm{q}'$} ($(B.west-|m-6-3.east)+(0pt,0pt)$);
        \end{tikzpicture}\label{eqn:qOnW}
      \end{equation}
      where we have $\bm{q}_k = \bm{r}_k$ if $u = \langle \mathcal{A} \rangle$ for some assignment $\mathcal{A}$ that satisfies $C_k$ and that $\bm{q}_k$ consists only of $\id$ (and its inverse) otherwise (i.\,e.\ if $u$ does not encode an assignment or if the assignment does not satisfy $C_k$).
      By \autoref{fct:commutatorInCrossDiagrams}, we may add the balanced commutators to the cross diagram (gray additions above).
      
      We show next that we have
      \[
        \bm{q}' =_{\mathcal{T}} \begin{cases}
          B[ \bm{r}_K, \dots, \bm{r}_1 ] & \text{if } u = \langle \mathcal{A} \rangle \text{ such that $\mathcal{A}$ satisfies $\varphi$} \\
          \id & \text{otherwise}
        \end{cases}
      \]
      for the state sequence $\bm{q}'$ on the right. The upper case immediately follows from cross diagram (\ref{eqn:qOnW}) because $\mathcal{A}$ satisfies all clauses if (and only if) it satisfies $\varphi = \bigwedge_{k = 1}^K C_k$ by definition. The other case has to be split into two subcases.
      If $u = \langle \mathcal{A} \rangle$ for some assignment $\mathcal{A}$ that does not satisfy $\varphi$, there has to be some $1 \leq k \leq K$ such that $\mathcal{A}$ does not satisfy the clause $C_k$. This implies that $\bm{q}_k$ consists only of $\id$ (and its inverse).
      If $u$ does not encode any assignment (i.\,e.\ $u \not\in \{ \bot, \top \}^*$), all $\bm{q}_1, \dots, \bm{q}_K$ will only consist of $\id$ states (or inverses of $\id$, by cross diagram (\ref{eqn:qOnW}).
      Thus, in both cases, there is some $1 \leq k \leq K$ with $\bm{q}_k =_{\mathcal{T}} \id$ and we have $\bm{q}' = B[\bm{q}_K, \dots, \bm{q}_1] =_\mathcal{T} \id$ by \autoref{fct:commutatorAndNegative}.
      
      We may summarize this by stating that we have the cross diagram
      \begin{equation}
        \begin{tikzpicture}[baseline=(m-2-1.base)]
          \matrix[matrix of math nodes, ampersand replacement=\&,
            text height=1.75ex, text depth=0.25ex] (m) {
                \& u \& \\
              \bm{q} \&\& \bm{q}' \begin{cases}
                = B[ \bm{r}_K, \dots, \bm{r}_1 ] & \text{if } u = \langle \mathcal{A} \rangle \text{ such that $\mathcal{A}$ satisfies $\varphi$} \\
                =_{\mathcal{T}} \id & \text{otherwise}
              \end{cases} \\
                \& u \& \\
            };
            \draw[->] (m-2-1) -> (m-2-3);
            \draw[->] (m-1-2) -> (m-3-2);
        \end{tikzpicture}\label{eqn:finalCross}
      \end{equation}
      for all $u \in \Sigma^*$ of length $|u| = N$.
      
      Now, assume that there is some assignment $\mathcal{A}: \mathbb{X} \to \mathbb{B}$ such that $\mathcal{A}$ satisfies $\varphi$.
      By the choice of the $\bm{r}_k$ (see the hypothesis of the theorem), we have $B[\bm{r}_K, \dots, \bm{r}_1] \neq_{\mathcal{R}} \id$ and, as $\mathcal{R}$ is a subautomaton of $\mathcal{T}$, also $B[\bm{r}_K, \dots, \bm{r}_1] \neq_{\mathcal{T}} \id$.
      In particular, there is some witness $v \in \Sigma^*$ with $B[\bm{r}_K, \dots, \bm{r}_1] \circ v \neq v$ (both with respect to the action of $\mathcal{R}$ and the action of $\mathcal{T}$).
      This -- together with diagram (\ref{eqn:finalCross}) -- yields the cross diagram
      \[
        \begin{tikzpicture}[baseline=(m-2-1.base), auto]
          \matrix[matrix of math nodes, ampersand replacement=\&,
          text height=1.75ex, text depth=0.25ex, column 3/.style={anchor=base west}] (m) {
            \& \langle \mathcal{A} \rangle \& \& v \& \\
            \bm{q} \&   \& B[ \bm{r}_K, \dots, \bm{r}_1 ] \& \& {} \\
            \& \langle \mathcal{A} \rangle \& \& \neq v \& \\
          };
          \foreach \j in {1} {
            \foreach \i in {1,3} {
              \draw[->] let
                \n1 = {int(2+\i)},
                \n2 = {int(1+\j)}
              in
                (m-\n2-\i) -> (m-\n2-\n1);
              \draw[->] let
                \n1 = {int(1+\i)},
                \n2 = {int(2+\j)}
              in
                (m-\j-\n1) -> (m-\n2-\n1);
            };
          };
        \end{tikzpicture}
      \]
      (where $\neq v$ is used as a placeholder for a word different to $v$ and the state sequence on the right is omitted because it is not relevant).
      This shows that $\bm{q}$ acts non-trivially on $\langle \mathcal{A} \rangle v$ (with respect to the action of $\mathcal{T}$).
      
      For the other direction, assume that $\varphi$ is not satisfiable. We will show that $\bm{q}$ acts as the identity on all words of length at least $N$ (and, thus, in particular, also on shorter ones).
      Consider a word $uv \in \Sigma^*$ where $u$ is of length $N$. From the cross diagram (\ref{eqn:finalCross}), we obtain the cross diagram
      \[
        \begin{tikzpicture}[baseline=(m-2-5.base), auto]
          \matrix[matrix of math nodes, ampersand replacement=\&,
          text height=1.75ex, text depth=0.25ex] (m) {
                   \& u \& \& v \& \\
            \bm{q} \&                 \&
            \bm{q}' {}=_\mathcal{T} \id \& \& {} \\
                   \& u \& \& v \& \\
          };
          \draw[->] (m-2-1) -> (m-2-3);
          \draw[->] (m-1-2) -> (m-3-2);
          \foreach \j in {1} {
            \foreach \i in {3} {
              \draw[->] let
                \n1 = {int(2+\i)},
                \n2 = {int(1+\j)}
              in
                (m-\n2-\i) -> (m-\n2-\n1);
              \draw[->] let
                \n1 = {int(1+\i)},
                \n2 = {int(2+\j)}
              in
                (m-\j-\n1) -> (m-\n2-\n1);
            };
          };
        \end{tikzpicture}
      \]
      because the case $u = \langle \mathcal{A} \rangle$ for an assignment $\mathcal{A}$ satisfying $\varphi$ cannot occur (as no such assignment exists). This shows $\bm{q} \circ uv = uv$ and, in general, $\bm{q} =_\mathcal{T} \id$ as desired.
    \end{proof}
    \begin{remark}
      The automaton constructed by the reduction for \autoref{thm:coNPHardFamily} has $R (KN + N + 1)$ many states where $R$ is the number of states of $\mathcal{R}$, $N$ is the numbers of variables in the input formula and $K$ is the number of clauses. The $\mathcal{R}$-depth of the automaton is $N$.
    \end{remark}
  
    \paragraph*{The Uniform Word Problem for Finitary Automaton Groups.}
    We may plug in the (finitary) \GAut generating the group $A_5$ from \autoref{ex:A5} for $\mathcal{R}$ into \autoref{thm:coNPHardFamily}. This yields a family of finitary \GAuta (with alphabet size five) whose uniform word problem is $\coNP$-hard. Together with \autoref{prop:finitaryWPisInCoNP}, this yields:
    \begin{corollary}\label{cor:finitaryWPisCoNPComplete}
      The uniform word problem for finitary automaton groups
      \problem{
        a finitary \GAut $\mathcal{T} = (Q, \Sigma, \delta)$ and\newline
        a state sequence $\bm{q} \in Q^{\pm *}$
      }{
        is $\bm{q} = \idGrp$ in $\mathscr{G}(\mathcal{T})$?
      }\noindent
      is \coNP-complete (under many-one $\LogSpace$-reductions).
      This remains true if we fix a set with five elements as the alphabet of the input automaton.
    \end{corollary}
  
    \paragraph*{Finitely Approximable Automata.}
    The next step now is to reduce the alphabet size from five to two. The main idea is to use the automaton generating Grigrochuk's group from \autoref{ex:grigorchuk} and plug it into \autoref{thm:coNPHardFamily} as the automaton $\mathcal{R}$ (which is possible by \autoref{ex:grigorchukCommutator}).
    Unfortunately, Girgorchuk's group is not finitary (it is what is called a \emph{bounded} automaton group instead; see, for example, \cite{waechter2024word}).\footnote{\dots and there is no other fixed suitable finitary automaton with binary alphabet as those automata always generate solvable (finite) groups; see e.\,g.\ \cite{waechter2024word}.}
    However, this is not a problem since we are considering the uniform version of the word problem (where the automaton is part of the input). The main idea is that we may \enquote{unroll} the cycle in the generating automaton sufficiently often to still distinguish non-identity elements from identity ones.
    
    \pagebreak
    We formalize this in the following concept.
    \begin{definition}\label{def:finiteApproximation}
      The \emph{finite approximation} of \emph{depth} $D$ of a \GAut $\mathcal{T} = (Q, \Sigma, \delta)$ is the finitary \GAut $\mathcal{T}' = (Q', \Sigma, \delta')$ with
      \begin{align*}
        Q' &= \{ (q, i) \mid q \in Q, 0 \leq i < D \} \cup \{ \id \} \text{ and}\\
        \delta'
        &= \{ \trans{\id}{a}{a}{\id} \mid a \in \Sigma \} \\
        &\cup \{ \trans{(p, i)}{a}{b}{(q, i + 1)} \mid \trans{p}{a}{b}{q} \in \delta, 0 \leq i < D - 1 \} \\
        &\cup \{ \trans{(p, D - 1)}{a}{b}{\id} \mid \trans{p}{a}{b}{q} \in \delta \} \text{.}
      \end{align*}
      For a state sequence $\bm{q} = q_\ell \dots, q_1$ with $q_1, \dots, q_\ell \in Q^{\pm 1}$, we call $\bm{q}' = (q_\ell, 0) \dots (q_1, 0)$ (with $(q^{-1}, 0) = (q, 0)^{-1}$ for all $q \in Q$) the \emph{projection} of $\bm{q}$ in $\mathcal{T}'$.
    \end{definition}
    The intuitive idea is that we store the number of letters read in the second component of the states and otherwise operate as in the original automaton $\mathcal{T}$. However, after having read $D$ letters, we simply go into the identity state $\id$. This results in an automaton without any cycles except for the self-loops at $\id$ (in fact, the automaton is layered) and, thus, a finitary one (with depth $D$).
    
    The main idea of this construction is, of course, that the projection of a state sequence in the finite approximation of depth $D$ acts on words of length at most $D$ in the same way as the original state sequence.
    \begin{fact}\label{fct:actionInFiniteApproximation}
      Let $\mathcal{T}' = (Q', \Sigma, \delta')$ be the finite approximation of depth $D$ of some \GAut $\mathcal{T} = (Q, \Sigma, \delta)$ and let $\bm{q}' \in Q'^{\pm *}$ be the projection of a state sequence $\bm{q} \in Q^{\pm *}$ in $\mathcal{T}'$. Then, we have the cross diagram
      \begin{center}
        \begin{tikzpicture}[baseline=(m-2-1.base)]
          \matrix[matrix of math nodes, text height=1.25ex, text depth=0.25ex] (m) {
              & w & \\
            \bm{q}' & & \id \\
              & \bm{q} \circ w & \\
          };
          \foreach \i in {1} {
            \draw[->] let
              \n1 = {int(2+\i)}
            in
              (m-2-\i) -> (m-2-\n1);
            \draw[->] let
              \n1 = {int(1+\i)}
            in
              (m-1-\n1) -> (m-3-\n1);
          };
        \end{tikzpicture}%
      \end{center}
      in $\mathcal{T'}$ for all words $w \in \Sigma^*$ of length $|w| = D$. In particular, we have $\bm{q}' \circ w = \bm{q} \circ w$ for all words $w \in \Sigma^*$ of length $|w| \leq D$ (where the action on the left-hand side is with respect to $\mathcal{T}'$ and the one on the right-hand side is with respect to $\mathcal{T}$).
    \end{fact}
    \begin{proof}
      This follows directly from the construction of $\mathcal{T}'$
      as we have the cross diagram
      \begin{center}
        \begin{tikzpicture}[baseline=(m-4-1.east)]
          \matrix[matrix of math nodes, text height=1.25ex, text depth=0.25ex, ampersand replacement=\&] (m) {
              \& a_1 \& \& \dots \& \& a_D \& \\
            q_0 \& \& q_1 \& \dots \& q_{D - 1} \& \& q_D \\
              \& b_1 \& \& \dots \& \& b_D \& \\
          };
          \foreach \j in {1} {
            \foreach \i in {1, 5} {
              \draw[->] let
                \n1 = {int(2+\i)},
                \n2 = {int(1+\j)}
              in
               (m-\n2-\i) -> (m-\n2-\n1);
              \draw[->] let
                \n1 = {int(1+\i)},
                \n2 = {int(2+\j)}
              in
               (m-\j-\n1) -> (m-\n2-\n1);
            };
          };
        \end{tikzpicture} in $\mathcal{T}$
      \end{center}
      (for $q_1, \dots, q_D \in Q$, $a_1, \dots, a_D, b_1, \dots, b_D \in \Sigma$) if and only if we have the cross diagram
      \begin{center}
        \hspace*{\fill}\begin{tikzpicture}[baseline=(m-4-1.east)]
          \matrix[matrix of math nodes, text height=1.25ex, text depth=0.25ex, ampersand replacement=\&] (m) {
              \& a_1 \& \& \dots \& \& a_D \& \\
            (q_0, 0) \& \& (q_1, 1) \& \dots \& (q_{D - 1}, D - 1) \& \& \id \\
              \& b_1 \& \& \dots \& \& b_D \& \\
          };
          \foreach \j in {1} {
            \foreach \i in {1, 5} {
              \draw[->] let
                \n1 = {int(2+\i)},
                \n2 = {int(1+\j)}
              in
               (m-\n2-\i) -> (m-\n2-\n1);
              \draw[->] let
                \n1 = {int(1+\i)},
                \n2 = {int(2+\j)}
              in
               (m-\j-\n1) -> (m-\n2-\n1);
            };
          };
        \end{tikzpicture} in $\mathcal{T}'$.\qedhere
      \end{center}
    \end{proof}
    
    For appropriate choices of $D$, the finite approximation of a \GAut can be computed efficiently.
    \begin{fact}\label{fct:finiteApproximationLogSpace}
      The finite approximation of depth $D$ of a \GAut can be computed in logarithmic space if $D$ is given in unary. In other words, the function
      \function
        {a \GAut $\mathcal{T}$ and\newline
         a natural number $D$ given in unary}
        {the finite approximation of depth $D$ of $\mathcal{T}$}\noindent
      is \LogSpace-computable.
    \end{fact}
    \begin{proof}
      To compute the construction, we only need to count up to the value of $D - 1$ for the second component and, using a binary representation, this requires space $\log D$.
    \end{proof}
    
    Next, we extend the idea of finitely approximating a single automaton to a whole family of automata.
    \begin{definition}\label{def:finitelyApproximable}
      A family $\mathcal{F}$ of \GAuta is \emph{finitely approximable} if the function
      \function
        [the family $\mathcal{F}$ of \GAuta]
        {a \GAut $\mathcal{T} = (Q, \Sigma, \delta) \in \mathcal{F}$ and\newline
         a state sequence $\bm{q} \in Q^{\pm*}$}
        {a finitary \GAut $\mathcal{T}' = (Q', \Sigma, \delta')$ and\newline
         a state sequence $\bm{q}' \in Q'^{\pm *}$ with $\bm{q}' \neq \idGrp$ in $\mathscr{G}(\mathcal{T}') \iff \bm{q} \neq \idGrp$ in $\mathscr{G}(\mathcal{T})$
        }\noindent
      is $\LogSpace$-computable.
    \end{definition}
    
    The main point of our proof now is that, for contracting automata $\mathcal{R}$, a family of $\mathcal{R}$-finitary automata is finitely approximable.
    \begin{proposition}\label{prop:finitelyApproximable}
      Let $\mathcal{R}$ be a contracting \GAut. Then any family $\mathcal{F}$ of $\mathcal{R}$-finitary \GAuta is finitely approximable.
    \end{proposition}
    \begin{proof}
      Let $\mathcal{R} = (R, \Sigma, \varrho)$.
      Since $\mathcal{R}$ is contracting, there are constants $A$ and $B$ by \autoref{lem:finitarilyContracting} such that, for every $\mathcal{R}$-finitary \GAut $\mathcal{T} = (Q, \Sigma, \delta)$, we have
      \[
        \bm{q} \neq \idGrp \text{ in } \mathscr{G}(\mathcal{T}) \implies
        \exists w \in \Sigma^* : \bm{q} \circ w \neq w \text{ and } |w| \leq |Q| + A \log |\bm{q}| + B
      \]
      for all $\bm{q} \in Q^{\pm *}$.
      
      Now, consider some fixed \GAut $\mathcal{T} = (Q, \Sigma, \delta) \in \mathcal{F}$ (which needs to be $\mathcal{R}$-finitary) and a state sequence $\bm{q} \in Q^{\pm*}$. We need to compute in logarithmic space a finitary \GAut $\mathcal{T}' = (Q', \Sigma, \delta')$ and a state sequence $\bm{q}' \in Q'^{\pm *}$ with $\bm{q}' \neq_{\mathcal{T}'} \varepsilon \iff \bm{q} \neq_\mathcal{T} \varepsilon$.
      
      For the automaton $\mathcal{T}'$, we use the finite approximation of depth $D = |Q| + A \log |\bm{q}| + B$ of $\mathcal{T}$, which can be computed in logarithmic space by \autoref{fct:finiteApproximationLogSpace} (since $|Q|$ and $|\bm{q}|$ are both given in unary by the input). For the state sequence $\bm{q}'$, we may simply choose the projection of $\bm{q}$ in $\mathcal{T}'$, which -- as a simple alphabetic substitution -- is certainly also \LogSpace-computable.
      
      Now, if $\bm{q} \neq \idGrp$ in $\mathscr{G}(\mathcal{T})$, there is (by the above) a witness $w \in \Sigma^*$ with $\bm{q} \circ w \neq w$ (with respect to the action of $\mathcal{T}$) and $|w| \leq |Q| + A \log |\bm{q}| + B = D$. This is also a witness for $\bm{q}' \neq \idGrp$ in $\mathscr{G}(\mathcal{T}')$ as we have $\bm{q}' \circ w = \bm{q} \circ w \neq w$ by \autoref{fct:actionInFiniteApproximation}
      (where the first action is with respect to $\mathcal{T}'$ and the second one is with respect to $\mathcal{T}$).
      Conversely, any witness for $\bm{q}' \neq \idGrp$ in $\mathscr{G}(\mathcal{T}')$ is in particular also a witness for $\bm{q} \neq \idGrp$ in $\mathscr{G}(\mathcal{T}')$.
    \end{proof}
    
    \paragraph*{Binary Alphabet.}
    We have now all the pieces to show the final form of our result:
    \begin{theorem}
      The uniform word problem for finitary automaton groups with binary alphabet
      \problem[
        the binary alphabet $\Sigma = \{ 0, 1 \}$
      ]{
        a finitary \GAut $\mathcal{T} = (Q, \Sigma, \delta)$ and\newline
        a state sequence $\bm{q} \in Q^{\pm *}$
      }{
        is $\bm{q} = \idGrp$ in $\mathscr{G}(\mathcal{T})$?
      }\noindent
      is \coNP-complete.
    \end{theorem}
    \begin{proof}
      As before, the problem is in \coNP by \autoref{prop:finitaryWPisInCoNP}.
      
      Let $\mathcal{G}$ denote the \GAut from \autoref{ex:grigorchuk} generating Grigorchuk's group. By \autoref{ex:grigorchukCommutator}, we obtain from \autoref{thm:coNPHardFamily} that the problem
      \problem{
        a $\mathcal{G}$-finitary \GAut $\mathcal{T} = (Q, \Sigma, \delta)$ and\newline
        a state sequence $\bm{q} \in Q^{\pm *}$
      }{
        is $\bm{q} = \idGrp$ in $\mathscr{G}(\mathcal{T})$?
      }\noindent
      is \coNP-hard. Recall from \autoref{ex:grigorchuk} that $\mathcal{G}$ is contracting. Thus, the family of $\mathcal{G}$-finitary \GAuta is finitely approximable by \autoref{prop:finitelyApproximable} and the function from \autoref{def:finitelyApproximable} yields a \LogSpace-reduction from the above problem to the problem in the theorem statement.
    \end{proof}
  \end{section}

  \begin{section}{The Compressed Word Problem}
    \paragraph*{Straight-Line Programs.}
    A \emph{straight-line program} (or \emph{SPL}) is a context-free grammar which generates exactly one word. A \emph{context-free grammar} mainly consists of a set of \emph{rules} where the left-hand side consists of a single \emph{non-terminal symbol} and the right-hand side is a finite word whose letters may be non-terminal or \emph{terminal symbols}. A word is generated by starting at a dedicated non-terminal \emph{starting symbol} and then iteratively replacing non-terminal symbols by matching right-hand sides of rules until only terminal symbols are left. By convention, non-terminal symbols are usually capitalized while terminal symbols are lowercase.
    More details may be found in any introductory textbook on formal language theory (see e.\,g.\ \cite{hopcroft1979introduction}).
    \begin{remark}
      The word generated by an SLP may be exponential in the size of the SLP. An example for this is given by the rules $A_1 \to a$ and $A_{n + 1} \to A_n A_n$ (for $1 \leq n \leq N$).
    \end{remark}
  
    \paragraph*{The Compressed Word Problem.}
    The uniform compressed word problem for finitary automaton groups is the problem
    \problem{
      a finitary \GAut $\mathcal{T} = (Q, \Sigma, \delta)$ and\newline
      a straight-line program generating a state sequence $\bm{q} \in Q^{\pm *}$
    }{
      is $\bm{q} = \idGrp$ in $\mathscr{G}(\mathcal{T})$?
    }\noindent
    The difference to its ordinary version is that the state sequence is not given directly but only by a generating straight-line program. Due to the potential exponential blow-up when decompressing the SLP, the complexity of the compressed version differs in many cases from the one of the ordinary word problem. More information on the compressed word problem may be found in \cite{bassino2020complexity, lohrey2014compressed}.
    
    In this section, we will show that the uniform compressed word problem for finitary automaton groups is \PSpace-complete. We will first do this by giving a direct reduction from the satisfiability problem for quantified boolean formulae but later we give another simpler but less direct proof by finitely approximating the compressed word problem of Grigorchuk's group.
    
    First, however, we prove the easier direction and describe how the uniform word problem for finitary automaton groups can be solved in polynomial space.
    \begin{proposition}\label{prop:compressedWPisInPSpace}
      The uniform compressed word problem for finitary automaton groups is in \PSpace.
    \end{proposition}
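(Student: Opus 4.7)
The plan is to reduce the problem to evaluating the action of $\bm{q}$ on polynomially many short words. Since $\mathcal{T}$ is finitary, its depth $d$ is at most $|Q|$, and a short induction on the length of a state sequence, using that each individual state acts as the identity on all letter positions beyond $d$, shows that $\bm{q} = \idGrp$ in $\mathscr{G}(\mathcal{T})$ iff $\bm{q} \circ u = u$ for every $u \in \Sigma^d$. I would therefore either deterministically enumerate all such $u$ (which only needs polynomial space for a counter) or non-deterministically guess one and appeal to Savitch's theorem. What remains is to compute $\bm{q} \circ u$ in polynomial space without ever materializing the potentially exponentially long sequence $\bm{q}$.

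For this I would define a recursive procedure $f(A, u)$ returning $\bm{q}_A \circ u$, where $A$ is a non-terminal of the SLP deriving the state sequence $\bm{q}_A$ and $u$ is a word of length at most $d$. For a rule $A \to X_1 X_2 \cdots X_k$ with each $X_i$ either a non-terminal or a terminal from $Q \cup Q^{-1}$, the convention that the rightmost symbol of a state sequence acts first yields
\[
  f(A, u) = f(X_1, f(X_2, \dots, f(X_k, u) \dots))\text{,}
\]
which I would evaluate iteratively from right to left: start with $u$, overwrite it by $f(X_k, u)$, then by $f(X_{k-1}, \cdot)$, and so on. For a terminal $q \in Q$ the value $f(q, u)$ is computed by a single left-to-right pass of $\mathcal{T}$ on $u$ starting in $q$, and for $q^{-1} \in Q^{-1}$ by using the flipped transitions from \autoref{fig:inverseCrossDiagram}.

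The main obstacle is the space analysis. Because an SLP is acyclic there is a topological order on its non-terminals, so the recursion depth of $f$ is bounded by the number of non-terminals and is therefore polynomial in the input size. Each activation record needs to store only the current non-terminal, an index into its right-hand side, and one intermediate word of length at most $d$ (the result of the most recently completed subcall, which becomes the input to the next one); all of these are of polynomial size, so the total space consumption of $f$ is polynomial. Combined with the outer enumeration of $u$ (or with non-deterministic guessing plus Savitch's theorem), this yields the claimed \PSpace{} upper bound. The key point that makes the approach work is precisely that the iterative right-to-left evaluation reuses a single word-sized buffer per recursion level, so the exponential length of $\bm{q}$ is never written down.
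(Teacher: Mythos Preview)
Your proposal is correct and follows essentially the same approach as the paper: both exploit that the depth (and hence the witness length) is bounded by $|Q|$, and both compute the action of $\bm{q}$ on such a short word by recursively descending into the SLP, using acyclicity to bound the recursion depth and storing only a position pointer plus one intermediate word per level. The only cosmetic difference is that the paper phrases the outer loop as guessing a single witness (inheriting the \coNP argument), whereas you also spell out the deterministic enumeration alternative.
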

    \begin{proof}
      We follow the same guess and check approach as in the proof for \autoref{prop:finitaryWPisInCoNP}. Since the length of the witness (on which $\bm{q}$ acts non-trivially) is bounded by the size of $\mathcal{T}$, it can clearly be guessed in linear space. The more interesting part is the \enquote{check} part. Here, we cannot simply decompress $\bm{q}$ and then apply it state by state (since $\bm{q}$ can be exponentially long). However, we can still compute (and store) the intermediate $u_i$ directly from the SLP. We start with the rule $S \to \alpha_\ell \dots \alpha_1$ where the $\alpha_i$ are either terminal symbols (i.\,e.\ states) or non-terminals. We apply the symbols $\alpha_i$ from right to left to $u$. If $\alpha_i$ is a state, we can directly apply it to the current word. If it is a non-terminal symbol $\alpha_i = B$, we descend recursively into the rule $B \to \beta_k \dots \beta_1$. For this, we have to store where we were in the previous rule (this can, for example, be done using a pointer, which is clearly possible even in linear space). Note that we may assume that the same non-terminal symbol does not appear twice in the same recursive branch as this would correspond to a syntax tree with multiple instances of the same non-terminal symbol on one branch, which cannot occur if the grammar only generates a single word. Thus, in the worst case, we need to store one position for every rule in the input, which is still possible in linear space.
    \end{proof}
  
    \paragraph*{A Reduction from Quantified Boolean Formulae.}
    For the other direction -- namely to prove that our problem is \PSpace-hard -- we use the following problem for the reduction.
    \begin{theorem}\label{thm:3QBF}
      The problem \DecProblem{$3$-QBF}
      \problem{
        a quantified boolean formula $\varphi = \lnot \forall x_N \lnot \forall x_{N - 1} \dots \lnot \forall x_1: \varphi_0$\newline
        where $\varphi_0$ is in 3-conjunctive normal form and\newline
        contains no variables other than $\{ x_1, \dots, x_N \}$.
      }{
        is $\varphi$ true?
      }\noindent
      is \PSpace-complete (under many-one \LogSpace-reductions).
    \end{theorem}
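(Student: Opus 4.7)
The plan handles the two directions of PSpace-completeness separately, with the hardness direction being the nontrivial one. For PSpace membership I would apply the textbook recursive QBF algorithm: walking from the outside in, for each bound variable we try both truth values and recurse, incurring one bit of space per quantifier; the leading $\lnot$ of each $\lnot \forall x_i$ block is simply absorbed into the returned truth value. Since the recursion depth is $N$ and each level uses only constant workspace, the whole procedure runs in polynomial (indeed linear) space.

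For \PSpace-hardness I would reduce from the well-known PSpace-complete restriction of TQBF to strictly alternating prenex formulas starting with $\exists$ outermost and having matrix in $3$-conjunctive normal form (PSpace-completeness of this restriction follows from the classical TQBF proof combined with Tseitin's $3$-CNF transformation). The key identity is $\exists x\, \chi \equiv \lnot \forall x\, \lnot \chi$. A straightforward induction on $n$ using this identity shows that a formula $\lnot \forall x_n \lnot \forall x_{n-1} \cdots \lnot \forall x_1\, \varphi_0$ in the target shape is logically equivalent to the strictly alternating prenex formula $\exists x_n \forall x_{n-1} \exists x_{n-2} \cdots Q x_1\, \varphi_0'$, where $Q = \forall$ and $\varphi_0' = \varphi_0$ when $n$ is even, and $Q = \exists$ and $\varphi_0' = \lnot \varphi_0$ when $n$ is odd.

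My reduction would then work as follows. Given a strictly alternating input formula starting with $\exists$ outermost and with matrix $\psi$ in 3-CNF, I would, if necessary, first pad the formula with a single dummy innermost quantifier over a fresh Boolean variable that does not occur in $\psi$; this changes neither the truth value nor the 3-CNF shape of $\psi$, and it ensures that the innermost quantifier of the (padded) input is $\forall$, i.e.\ that the total number of quantifiers is even. I would then set $N$ to this even number, relabel the bound variables as $x_1, \ldots, x_N$ from inside to outside, and set $\varphi_0 := \psi$. By the equivalence above, the resulting 3-QBF instance $\lnot \forall x_N \lnot \forall x_{N-1} \cdots \lnot \forall x_1\, \varphi_0$ is true iff the input formula is, and the whole transformation is clearly a simple relabeling computable in logarithmic space.

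The only subtlety — and the main, though minor, obstacle — is precisely this parity alignment: without the innermost padding one would be forced to take $\varphi_0 = \lnot \psi$, and re-expressing this in 3-CNF would require a fresh Tseitin transformation and would introduce new innermost existentially quantified auxiliary variables, disturbing the strictly alternating quantifier prefix. The single-dummy-quantifier trick sidesteps this entirely and keeps the reduction purely syntactic.
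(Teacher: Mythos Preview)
Your proposal is correct and follows essentially the same route as the paper: both arguments hinge on the equivalence $\exists x\,\chi \equiv \lnot\forall x\,\lnot\chi$ together with dummy-quantifier padding to fix the parity of the prefix. The only packaging difference is that you take as your \PSpace-hard starting point a QBF that is already strictly alternating with a $3$-CNF matrix, whereas the paper starts from general CNF-QBF and carries out the $3$-CNF conversion and the alternation padding explicitly as part of the reduction.
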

    \begin{proof}
      \enlargethispage{2\baselineskip}
      We reduce the problem
      \problem{
        $\varphi = \exists x_1 \forall x_2 \dots Q_N x_N: \varphi_0$ where $\varphi_0$ is in conjunctive normal form
      }{
        does $\varphi$ hold?
      }\noindent
      where $Q_N = \exists$ if $N$ is odd and $Q_N = \forall$ is $N$ is even,      which is \PSpace-complete (under many-one \LogSpace-reductions) by \cite[Theorem~19.1]{papadimitriou97computational}, to the special version stated in the theorem.
      
      First, we split up all clauses with more than three literals in the common way\footnote{This is usually used to prove that \DecProblem{$3$-SAT} is \NP-complete (see e.\,g.\ \cite[Problem~9.5.2]{papadimitriou97computational}).} by using the fact that $L_1 \lor \dots \lor L_\ell$ and $\exists z: \left( (L_1 \lor L_2 \lor z) \land (\lnot z \lor L_3 \lor \dots L_\ell) \right)$ (where $z$ is a new, so-far unused variable) are equivalent (i.\,e.\ they are satisfied by exactly the same assignments). This introduces additional existential quantifiers at the innermost position.
      
      Clauses with less than three literals can be padded with new variables by using the fact that any literal $L$ is equivalent to $\forall z: (L \lor z)$ (where $z$ is again a new variable). This introduces additional universal quantifiers at the innermost position.
   
      We may ensure that the quantifiers are alternating between $\exists$ and $\forall$ by adding dummy variables not appearing in the matrix (i.\,e.\ the inner part of the formula without quantifiers) of the formula. This results in a formula of the form $\exists x_{N'} \forall x_{N' - 1} \dots \exists x_2 \forall x_1: \varphi_0'$ where $\varphi_0'$ is in $3$-conjunctive normal form which is equivalent to the original formula.
    
      Finally, we use the equivalence of $\exists z: \psi$ and $\lnot \forall z: \lnot \psi$ to eliminate all existential quantifiers.
      
      Note that each of these steps can be computed in \LogSpace and that, thus, the whole reduction can be done in \LogSpace.
    \end{proof}
  
    \begin{proposition}\label{prop:compressedWPisPSpaceHard}
      The uniform compressed word problem for finitary automaton groups is \PSpace-hard (under many-one \LogSpace-reductions).
      This remains true if we fix a set with five elements as the alphabet of the input automaton.
    \end{proposition}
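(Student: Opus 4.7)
The plan is to adapt the construction from \autoref{prop:finitaryWPisCoNPhard} and reduce from the \PSpace-complete $3$-QBF problem of \autoref{thm:3QBF}, exploiting the SLP to encode the alternating quantifier structure. Given an input formula $\varphi = \lnot \forall x_N \cdots \lnot \forall x_1\colon \varphi_0$ with $\varphi_0$ in $3$-conjunctive normal form, I reuse the automaton from \autoref{prop:finitaryWPisCoNPhard} essentially unchanged (keeping $\Sigma = \{ a_1, \dots, a_5 \}$, the distinguished letters $\bot, \top$, and the states $\id$, $\alpha_n$, $\beta_n$, $\sigma_n$, $c_{k, n}$) and augment it with a family of \emph{flip} states $\tau_0, \tau_1, \dots, \tau_{N-1}$: the state $\tau_n$ passes through the first $n$ input letters unchanged, swaps $\bot \leftrightarrow \top$ on the $(n+1)$-th letter (fixing every other letter of $\Sigma$) and then transitions to $\id$. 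Each $\tau_n$ is finitary and, since the swap is its own inverse, behaves as a self-inverse ``position flip'' in the group.

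The heart of the construction is a recursively defined sequence of SLP-compressed state sequences $\bm{q}_0, \bm{q}_1, \dots, \bm{q}_N$ mirroring the nested quantifiers. I start from $\bm{q}_0 = B_N[c_{K,N}, \dots, c_{1,N}]$ (the balanced iterated commutator from \autoref{prop:finitaryWPisCoNPhard}, using the abbreviation $B_n = B_{\beta_n, \alpha_n}$) and, for $i \geq 1$, set
\[
  \bm{q}_i = \sigma_N \cdot B_N\left[\bm{q}_{i-1},\, \bm{q}_{i-1}^{\tau_{N-i}}\right]^{-1}.
\]
The conjugate $\bm{q}_{i-1}^{\tau_{N-i}}$ is meant to behave like $\bm{q}_{i-1}$ with the value of $x_i$ flipped (since $x_i$ sits at position $N-i+1$), so the binary balanced commutator realises the AND over both values of $x_i$ via \autoref{fct:commutatorAndNegative} and \autoref{fct:commutatorAndPositive}, computing $\forall x_i\colon \psi_{i-1}$ where $\psi_{i-1} = \lnot \forall x_{i-1} \cdots \lnot \forall x_1\colon \varphi_0$. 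Left-multiplying by $\sigma_N$ and inverting then toggles $\sigma \leftrightarrow \idGrp$ on the $(N+1)$-th letter: on a well-formed input $\sigma_N$ acts unconditionally as $\sigma$ there, so $\sigma \cdot \sigma^{-1} = \idGrp$ in the ``true'' case and $\sigma \cdot \idGrp = \sigma$ in the ``false'' case, implementing the outer $\lnot$ and producing $\psi_i$.

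The main obstacle I expect is a careful inductive verification of the invariant that makes this work: on every well-formed input $\langle \mathcal{A} \rangle a u$, the element $\bm{q}_i$ must act trivially on the prefix $\langle \mathcal{A} \rangle$ and on the suffix $u$, and on $a$ as either $\sigma$ or $\idGrp$ (and nothing else in $A_5$), with the choice determined by the truth value of $\psi_i$ on the free variables read off $\langle \mathcal{A} \rangle$; on ill-formed inputs $\bm{q}_i$ must still act trivially. The first two parts follow from \autoref{fct:commutatorInCrossDiagrams} together with the $A_5$ identities \autoref{fct:commutatorAndNegative} and \autoref{fct:commutatorAndPositive}, while the third relies on the key observation that $\tau_{N-i}$ cannot promote an ill-formed letter to a well-formed one, so conjugation by $\tau_{N-i}$ preserves triviality on ill-formed inputs. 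Once this invariant is established, $\bm{q}_N$ depends on no input letter, giving $\bm{q}_N \neq_\mathcal{T} \id$ iff $\varphi = \psi_N$ is true. Finally, $\bm{q}_0$ is logspace-producible by \autoref{fct:BcommutatorIsLogSpaceComputable}, each recursion level adds only $O(1)$ fresh non-terminals reusing the single non-terminal for $\bm{q}_{i-1}$, and the extended automaton has polynomially many states over the fixed five-letter alphabet, so the whole reduction fits into logarithmic space.
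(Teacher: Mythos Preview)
Your proposal is correct and follows essentially the same approach as the paper: the paper likewise extends the automaton of \autoref{prop:finitaryWPisCoNPhard} by position-flip states (called $t_n$ there), defines $\bm{q}_n = (B_N[\bm{q}_{n-1}^{t_{N-n}}, \bm{q}_{n-1}])^{-1}\sigma_N$ to encode $\lnot\forall x_n$, proves exactly the invariant you describe by induction (distinguishing well-formed from ill-formed prefixes and using \autoref{fct:commutatorInCrossDiagrams}, \autoref{fct:commutatorAndNegative}, \autoref{fct:commutatorAndPositive}), and outputs the obvious SLP. The only differences are cosmetic --- the order of the two commutator entries, whether $\sigma_N$ sits on the left or the right, and whether the flip states also check well-formedness of the prefix --- none of which affects correctness.
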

    \begin{proof}
      We reduce \DecProblem{$3$-QBF} from \autoref{thm:3QBF} to the (complement of the) compressed word problem for fintary automaton groups (in $\LogSpace$). For this, assume that we get a quantified boolean formula $\varphi = \lnot \forall x_N \lnot \forall x_{N - 1} \dots \lnot \forall x_1: \varphi_0$ where $\varphi_0$ is in 3-conjunctive normal form and contains no variables other than $\{ x_1, \dots, x_N \}$.
      
      As a first step, we use the reduction described in \autoref{thm:coNPHardFamily} for the matrix $\varphi_0$ of our input formula and the automaton generating $A_5$ from \autoref{ex:A5} as $\mathcal{R}$. This yields a finitary \GAut $\mathcal{T}_0 = (Q_0, \Sigma, \delta_0)$ with $\Sigma = \{ a_1, \dots, a_5 \}$ and a state sequence $\bm{q}_0 \in Q_0^{\pm *}$ with
      \begin{center}
        \begin{tikzpicture}[baseline=(m-2-1.base)]
          \matrix[matrix of math nodes, ampersand replacement=\&,
            text height=1.75ex, text depth=0.25ex] (m) {
                \& u \& \\
              \bm{q}_0 \&\& \tilde{\bm{q}}_0 \begin{cases}
                = B[ \bm{r}_K, \dots, \bm{r}_1 ] & \text{if } u = \langle \mathcal{A} \rangle \text{ such that $\mathcal{A}$ satisfies $\varphi_0$} \\
                =_{\mathcal{T}_0} \id & \text{otherwise}
              \end{cases} \\
                \& u \& \\
            };
            \draw[->] (m-2-1) -> (m-2-3);
            \draw[->] (m-1-2) -> (m-3-2);
        \end{tikzpicture}
      \end{center}
      for all $u \in \Sigma^*$ of length $|u| = N$ by cross diagram (\ref{eqn:finalCross}) where we have re-used the notation $\langle \mathcal{A} \rangle$ from the proof of \autoref{thm:coNPHardFamily}. With our special choice of using the automaton generating $A_5$ from \autoref{ex:A5} for $\mathcal{R}$, we have $B[ \bm{r}_K, \dots, \bm{r}_1 ] =_{\mathcal{R}} \sigma$ and, since $\mathcal{R}$ is a subautomaton of $\mathcal{T}_0$, we even have $B[ \bm{r}_K, \dots, \bm{r}_1 ] =_{\mathcal{T}_0} \sigma$. This simplifies the above cross diagram into the diagram
      \begin{equation}
        \begin{tikzpicture}[baseline=(m-2-1.base)]
          \matrix[matrix of math nodes, ampersand replacement=\&,
            text height=1.75ex, text depth=0.25ex] (m) {
                \& u \& \\
              \bm{q}_0 \&\& =_{\mathcal{T}_0} \begin{cases}
                \sigma & \text{if } u = \langle \mathcal{A} \rangle \text{ such that $\mathcal{A}$ satisfies $\varphi_0$} \\
                \id & \text{otherwise}
              \end{cases} \\
                \& u \& \\
            };
            \draw[->] (m-2-1) -> (m-2-3);
            \draw[->] (m-1-2) -> (m-3-2);
        \end{tikzpicture}\label{eqn:simplifiedCross}
      \end{equation}
      for $u \in \Sigma^*$ of length $|u| = N$.
      Recall that the \enquote{otherwise} case occurs in two cases:
      first, if $u$ does not encode any assignment (i.\,e.\ if $u \not\in \{ \bot, \top \}$ for the special elements $\bot$ and $\top$ from $\Sigma$ chosen in the proof of \autoref{thm:coNPHardFamily}) and, second, if $u = \langle \mathcal{A} \rangle$ but the assignment $\mathcal{A}$ does not satisfy $\varphi_0$.
      
      Another thing to recall from the proof of \autoref{thm:coNPHardFamily} (see cross diagram (\ref{eqn:sigmaCrossDiagram}) and \autoref{fig:sigmaPart}) is that $\mathcal{T}_0$ contains (in particular) the states
      $
        \{ \sigma_n, \alpha_n, \beta_n \mid 0 < n \leq N \}
      $
      with the cross diagram
      \begin{equation}
        \begin{tikzpicture}[baseline=(m-2-1.base)]
          \matrix[matrix of math nodes, text height=1.25ex, text depth=0.25ex, ampersand replacement=\&] (m) {
                     \& w \& \\
            \gamma_n \&   \& |[align=left]| $\begin{cases}
              \gamma & \text{if } w \in \{ \bot, \top \}^* \\
              \id & \text{otherwise}
            \end{cases}$ \\
                     \& w \& \\
          };
          \foreach \j in {1} {
            \foreach \i in {1} {
              \draw[->] let
                \n1 = {int(2+\i)},
                \n2 = {int(1+\j)}
              in
                (m-\n2-\i) -> (m-\n2-\n1);
              \draw[->] let
                \n1 = {int(1+\i)},
                \n2 = {int(2+\j)}
              in
                (m-\j-\n1) -> (m-\n2-\n1);
            };
          };
        \end{tikzpicture}\label{eqn:gammaN}
      \end{equation}
      for $\gamma \in \{ \sigma, \alpha, \beta \}$, $0 \leq n \leq N$ and all words $w \in \Sigma^*$ of length $|w| = n$
      (where $\gamma$ is a state in the automaton generating $A_5$ from \autoref{ex:A5}).
      
      We perform a second reduction on the output of the first one (which is possible since $\LogSpace$-computable functions are closed under composition, see, for example, \cite[Proposition~8.2]{papadimitriou97computational}).
      Here, we need to compute a finitary \GAut $\mathcal{T}$ and a state sequence $\bm{q}$ encoded as an SLP such that $\bm{q} \neq_\mathcal{T} \id$ if and only if $\varphi$ holds.
      
      \begin{figure}\centering
        \begin{tikzpicture}[auto, shorten >=1pt, >=latex]
          \node[state, ellipse, inner sep=0pt] (N) {$t_{N-1}$};
          \node[state, ellipse, inner sep=0pt, right=of N] (N-1) {$t_{N - 2}$};
          \node[right=of N-1] (dots) {$\dots$};
          \node[state, right=of dots] (0) {$t_{0}$};
          \node[state, right=of 0, dashed] (id) {$\id$};
          
          \draw[->] (N) edge node[above] {$\bot / \bot$} node[below] {$\top / \top$} (N-1)
                    (N-1) edge node[above] {$\bot / \bot$} node[below] {$\top / \top$} (dots)
                    (dots) edge node[above] {$\bot / \bot$} node[below] {$\top / \top$} (0)
                    (0) edge[thick] node[above] {$\bm{\bot / \top}$} node[below] {$\bm{\top / \bot}$} (id);
        \end{tikzpicture}
        \caption{The additional automaton part with the states $\{ t_n \mid 0 \leq n < N \}$. Missing transitions are $b/b$ transitions to the identity state (for $b \in \Sigma \setminus \{ \bot, \top \}$).}\label{fig:tPart}
      \end{figure}
      To obtain the automaton $\mathcal{T} = (Q, \Sigma, \delta)$, we extend $\mathcal{T}_0$ by some additional states (but keep the alphabet the same: $\Sigma = \{ a_1, \dots, a_5 \}$). The new states are $\{ t_n \mid 0 \leq n < N \}$ with the additional transitions
      \begin{align*}
        &\left\{ \trans{t_n}{\bot}{\bot}{t_{n - 1}}, \trans{t_n}{\top}{\top}{t_{n - 1}}, \trans{t_n}{b}{b}{\id} \mid 0 < n < N, b \in \Sigma \setminus \{ \bot, \top \} \right\}\\
        {}\cup{}& \left\{ \trans{t_0}{\bot}{\top}{\id}, \trans{t_0}{\top}{\bot}{\id}, \trans{t_0}{b}{b}{\id} \mid b \in \Sigma \setminus \{ \bot, \top \} \right\}
        \subseteq \delta \text{.}
      \end{align*}
      This new automaton part is depicted in \autoref{fig:tPart}. Note that we have not introduced any cycles and that this new part may be computed in logarithmic space (as we only need a counter up to the value of $N$).
      By construction, we obtain the cross diagram
      \begin{equation}
        \begin{tikzpicture}[baseline=(m-2-1.base), auto]
          \matrix[matrix of math nodes, ampersand replacement=\&,
            text height=1.75ex, text depth=0.25ex] (m) {
            \& w \& \& a \& \\
            t_n \& \& t_0 \& \& \id \\
            \& w \& \& \lnot a \& \\
          };
        
          \foreach \j in {1} {
            \foreach \i in {1,3} {
              \draw[->] let
                \n1 = {int(2+\i)},
                \n2 = {int(1+\j)}
              in
                (m-\n2-\i) -> (m-\n2-\n1);
              \draw[->] let
                \n1 = {int(1+\i)},
                \n2 = {int(2+\j)}
              in
                (m-\j-\n1) -> (m-\n2-\n1);
            };
          };
        \end{tikzpicture}
      \end{equation}
      for all $0 < n < N$ where $w \in \{ \bot, \top \}^*$ is of length $n$, $a \in \{ \bot, \top \}$ and $\lnot a$ denotes the negation of $a$ (i.\,e.\ $\lnot a = \top$ if $a = \bot$ and $\lnot a = \bot$ if $a = \top$). For general words $w \in \Sigma^*$ of length $0< n < N$ and letters $a \in \Sigma$, we get the cross diagram
      \begin{equation}
        \begin{tikzpicture}[baseline=(m-2-1.base), auto]
          \matrix[matrix of math nodes, ampersand replacement=\&,
          text height=1.75ex, text depth=0.25ex] (m) {
            \& w \& \& a \& \\
            t_n \& \& t' \& \& \id \\
            \& w \& \& \tilde{a} \& \\
          };
          
          \foreach \j in {1} {
            \foreach \i in {1,3} {
              \draw[->] let
                \n1 = {int(2+\i)},
                \n2 = {int(1+\j)}
              in
               (m-\n2-\i) -> (m-\n2-\n1);
              \draw[->] let
                \n1 = {int(1+\i)},
                \n2 = {int(2+\j)}
              in
                (m-\j-\n1) -> (m-\n2-\n1);
            };
          };
        \end{tikzpicture}\label{eqn:tnCrossDiagram}
      \end{equation}
      where we have $t' = t_0$ and $\tilde{a} = \lnot a$ if $wa \in \{ \bot, \top \}^*$ and $t' = \id$ and $\tilde{a} = a$ otherwise.
      
      We will define the state sequence $\bm{q}$ inductively and will use this inductive structure in the end to compute an SLP generating $\bm{q}$. We already have $\bm{q}_0$ and, for $0 < n \leq N$ and let:
      \[
        \bm{q}_n' = \left[ \bm{q}_{n - 1}^{t_{N - n} \beta_N}, \bm{q}_{n - 1}^{\alpha_N} \right] \text{ and }
        \bm{q}_n = \left( \bm{q}_n' \right)^{-1} \sigma_N
      \]
      
      The reason for choosing the $\bm{q}_n$ in this way is to satisfy a certain invariant. To state it, recall that $\varphi_0$ is already given, let $\varphi_0' = \varphi_0$ and
      \[
        \varphi_n' = \forall x_n: \varphi_{n - 1} \text{ and } \varphi_n = \lnot \varphi_n'
      \]
      for $0 < n \leq N$. Note that this means
      \[
        \varphi_n = \lnot \forall x_n \dots \lnot \forall x_1: \varphi_0
      \]
      and $\varphi_n'$ is the same except that it misses the out-most negation. In particular, we have $\varphi_N = \varphi$.
      Before we finally state the invariant, we extend the notation $\langle \mathcal{A} \rangle$ to assignments $\mathcal{A}: \{ x_{n + 1}, \dots, x_N \} \to \mathbb{B}$ (for $0 \leq n \leq N$) by letting $\langle \mathcal{A} \rangle = \mathcal{A}(x_N) \dots \mathcal{A}(x_{n + 1}) \in \{ \bot, \top \}^* \subseteq \Sigma^*$. Note that $|\langle \mathcal{A} \rangle|$ has length $N - n$ and that the empty word is the encoding of an empty assignment. Now, the invariant we want to satisfy with the $\bm{q}_n$ is that, for all $0 \leq n \leq N$, all words $u \in \Sigma^*$ of length $N - n$ and all words $v \in \Sigma^*$ of length $n$, we have the black part of the cross diagram
      \begin{equation}
        \begin{tikzpicture}[baseline=(q.base), auto]
          \matrix[matrix of math nodes, ampersand replacement=\&,
            text height=1.75ex, text depth=0.25ex, column 5/.style={anchor=base west}] (m) {
            \& |[gray]| u \& \& |[gray]| v \& \\
            |[gray]| \sigma_N \& \& {} \& \& |[gray]| = \begin{cases}
              \sigma & \text{if } uv \in \{ \bot, \top \}^* \\
              \id & \text{otherwise}
            \end{cases} \\
            \& u \& \& v \& \\
            \textcolor{gray}{(} \bm{q}_n' \textcolor{gray}{)^{-1}} \&   \& {} \& \& =_\mathcal{T}
              \begin{cases}
                \textcolor{gray}{(} \sigma \textcolor{gray}{)^{-1}} & \text{if } u = \langle \mathcal{A} \rangle \text{ s.\,t.\ } \mathcal{A} \text{ satisfies } \varphi_n' \text{ and } v \in \{ \bot, \top \}^* \\
                \id & \text{otherwise}
              \end{cases}
             \\
            \& u \& \& v \& \\
          };
          \foreach \j in {1} {
            \foreach \i in {1,3} {
              \draw[->, gray] let
                \n1 = {int(2+\i)},
                \n2 = {int(1+\j)}
              in
                (m-\n2-\i) -> (m-\n2-\n1);
              \draw[->, gray] let
                \n1 = {int(1+\i)},
                \n2 = {int(2+\j)}
              in
                (m-\j-\n1) -> (m-\n2-\n1);
            };
          };
        
          \foreach \j in {3} {
            \foreach \i in {1,3} {
              \draw[->] let
                \n1 = {int(2+\i)},
                \n2 = {int(1+\j)}
              in
               (m-\n2-\i) -> (m-\n2-\n1);
              \draw[->] let
                \n1 = {int(1+\i)},
                \n2 = {int(2+\j)}
              in
                (m-\j-\n1) -> (m-\n2-\n1);
            };
          };
        
          \coordinate (topLeft) at (m-2-1.north west -| m-4-1.south west);
          \draw[gray, decorate, decoration={brace}] ($(m-4-1.south west)+(0pt,0pt)$) -- node (q) {$\bm{q}_n ={}$} ($(topLeft)+(0pt,0pt)$);
        \end{tikzpicture}\label{eqn:invariant}
      \end{equation}
      where we let $\bm{q}_0' = \bm{q}_0$ and use the convention that the empty assignment satisfies a (closed)\footnote{A formula is \emph{closed} if it does not have any free variables, i.\,e.\ if all appearing variables are bound by a quantifier.} formula if and only if the formula holds. Note that the (black) \enquote{otherwise} case includes the case that $u$ or $v$ is not from $\{ \bot, \top \}^*$ and the case that $u$ encodes an assignment not satisfying $\varphi_n'$.
      
      As soon as we have established this invariant for some $n$, we immediately also get a version where we take the inverses of the states (this is possible since the action is trivial; normally, we would have to additionally flip the diagram along the horizontal axis).
      Using the cross diagram (\ref{eqn:gammaN}), we may add an additional line for $\sigma_N$ and obtain the gray additions to the above diagram for $0 < n \leq N$. Note that the product of the state sequences on the right hand side acts trivially if $u = \langle \mathcal{A} \rangle$ for some $\mathcal{A}$ which satisfies $\varphi_n'$ (this is the case if and only if $\mathcal{A}$ does not satisfy $\varphi_n = \lnot \varphi_n'$) and $v \in \{ \bot, \top \}^*$. It also acts trivially if $uv \not\in \{ \bot, \top \}^*$. On the other hand, it acts like $\sigma$ if $u = \langle \mathcal{A} \rangle$ for some $\mathcal{A}$ which does satisfy $\varphi_n = \lnot \varphi_n'$ and $v \in \{ \bot, \top \}^*$. This yields the cross diagram
      \begin{equation}
        \begin{tikzpicture}[baseline=(q.base), auto]
          \matrix[matrix of math nodes, ampersand replacement=\&,
            text height=1.75ex, text depth=0.25ex, column 5/.style={anchor=base west}] (m) {
            \& u \& \& v \& \\
            \bm{q}_n \& \& {} \& \& =_\mathcal{T}
              \begin{cases}
                \sigma & \text{if } u = \langle \mathcal{A} \rangle \text{ s.\,t.\ } \mathcal{A} \text{ satisfies } \varphi_n \text{ and } v \in \{ \bot, \top \}^* \\
                \id & \text{otherwise}
              \end{cases}\\
            \& u \& \& v \& \\
          };
          \foreach \j in {1} {
            \foreach \i in {1,3} {
              \draw[->] let
                \n1 = {int(2+\i)},
                \n2 = {int(1+\j)}
              in
               (m-\n2-\i) -> (m-\n2-\n1);
              \draw[->] let
                \n1 = {int(1+\i)},
                \n2 = {int(2+\j)}
              in
                (m-\j-\n1) -> (m-\n2-\n1);
            };
          };
        \end{tikzpicture}\label{eqn:qnInvariant}
      \end{equation}
      for all $0 \leq n \leq N$, all words $u \in \Sigma^*$ for length $N - n$ and all words $v \in \Sigma^*$ of length $n$.
      
      To prove the invariant (i.\,e.\ the black part of cross diagram (\ref{eqn:invariant})), we use induction on $n$. For $n = 0$, we have to show the cross diagram
      \[
        \begin{tikzpicture}[baseline=(q.base), auto]
          \matrix[matrix of math nodes, ampersand replacement=\&,
            text height=1.75ex, text depth=0.25ex, column 5/.style={anchor=base west}] (m) {
            \& u \& \& \varepsilon \& \\
            \bm{q}_0' = \bm{q}_0 \& \& {} \& \& =_\mathcal{T}
              \begin{cases}
                \sigma & \text{if } u = \langle \mathcal{A} \rangle \text{ s.\,t.\ } \mathcal{A} \text{ satisfies } \varphi_0' \text{ (which is equal to } \varphi_0 \text{)} \\
                \id & \text{otherwise}
              \end{cases}\\
            \& u \& \& \varepsilon \& \\
          };
          \foreach \j in {1} {
            \foreach \i in {1,3} {
              \draw[->] let
                \n1 = {int(2+\i)},
                \n2 = {int(1+\j)}
              in
               (m-\n2-\i) -> (m-\n2-\n1);
              \draw[->] let
                \n1 = {int(1+\i)},
                \n2 = {int(2+\j)}
              in
                (m-\j-\n1) -> (m-\n2-\n1);
            };
          };
        \end{tikzpicture}
      \]
      for $u \in \Sigma^*$ of length $N$. This, however, is exactly cross diagram (\ref{eqn:simplifiedCross}) (when we observe that $\mathcal{T}_0$ is a subautomaton of $\mathcal{T}$)
      
      \begin{figure}\centering
        \begin{tikzpicture}[baseline=(q.base), auto]
          \matrix[matrix of math nodes, ampersand replacement=\&, text height=1.75ex, text depth=0.25ex,
          ] (m) {
              \& u \& \& a \& \& v \& \\
            \alpha_N \& \& {} \& \& {} \& \& x \\
              \& u \& \& a \& \& v \& \\
            \bm{q}_{n - 1} \& \& {} \& \& {} \& \& \bm{p}_{n, 0} \\
              \& u \& \& a \& \& v \& \\
            \alpha_N^{-1} \& \& {} \& \& {} \& \& x^{-1} \\
              \& u \& \& a \& \& v \& \\
            \beta_N \& \& {} \& \& {} \& \& y \\
              \& u \& \& a \& \& v \& \\
            t_{N - n} \& \& {} \& \& \id \& \& \id \\
              \& u \& \& \tilde{a} \& \& v \& \\
            \bm{q}_{n - 1} \& \& {} \& \& {} \& \& \bm{p}_{n, 1} \\
              \& u \& \& \tilde{a} \& \& v \& \\
            t_{N - n}^{-1} \& \& {} \& \& \id \& \& \id \\
              \& u \& \& a \& \& v \& \\
            \beta_N^{-1} \& \& {} \& \& {} \& \& y^{-1} \\
              \& u \& \& a \& \& v \& \\
          };
          \foreach \j in {1,3,5,7,9,11,13,15} {
            \foreach \i in {1,3,5} {
              \draw[->] let
                \n1 = {int(2+\i)},
                \n2 = {int(1+\j)}
              in
               (m-\n2-\i) -> (m-\n2-\n1);
              \draw[->] let
                \n1 = {int(1+\i)},
                \n2 = {int(2+\j)}
              in
                (m-\j-\n1) -> (m-\n2-\n1);
            };
          };

          \path[fill=gray, opacity=0.2, rounded corners] (m-4-1.west |- m-3-2.north) rectangle (m-4-7.east |- m-5-6.south);
          \path[fill=gray, opacity=0.2, rounded corners] (m-12-1.west |- m-11-2.north) rectangle (m-12-7.east |- m-13-6.south);
          
          \node[gray, rotate=90, below=0pt of m-16-1.south, anchor=east, inner sep=0pt] (BN) {$\Big[$};
          \node[gray, rotate=90, above=0pt of m-2-1.north, anchor=east, inner sep=0pt] (BNc) {$\Big]$};
          \node[gray, rotate=90, anchor=base] at ($(m-6-1)!0.5!(m-8-1)$ |- BN.base) {$,$};
          
          \draw[gray, decorate, decoration={brace}] ($(BN.north west)+(-8pt,0pt)$) -- node (q) {$\bm{q}'_n ={}$} ($(BNc.north east)+(-8pt,0pt)$);
          
          \node[gray, rotate=90, below=0pt of m-16-7.south, anchor=east, inner sep=0pt] (B0) {$\Big[$};
          \node[gray, rotate=90, above=0pt of m-2-7.north, anchor=east, inner sep=0pt] (B0c) {$\Big]$};
          \node[gray, rotate=90, anchor=base] at ($(m-6-7)!0.5!(m-8-7)$ |- B0.base) {$,$};
        \end{tikzpicture}
        \caption{Cross diagram for the inductive step. The shaded parts are due to induction (compare to cross diagram (\ref{eqn:qnInvariant})), the lines involving $\alpha_N$ or $\beta_N$ are due to cross diagram (\ref{eqn:gammaN}) and the ones involving $t_{N - n}$ follow from cross diagram (\ref{eqn:tnCrossDiagram}). The commutator may be added (gray additions) due to \autoref{fct:commutatorInCrossDiagrams} (with $d = 1$/$D = 2$).}
        \label{fig:inductiveStepCrossDiagram}
      \end{figure}
      For the inductive step from $n - 1$ to $n$, consider a word $u \in \Sigma^*$ of length $N - n$, $a \in \Sigma$ and $v \in \Sigma^*$ of length $n - 1$. We have the black part of the cross diagram in \autoref{fig:inductiveStepCrossDiagram}
      where we have, by cross diagram (\ref{eqn:gammaN}),
      \begin{align*}
        x &= \begin{cases}
          \alpha & \text{if } uav \in \{ \bot, \top \}^* \\
          \id & \text{otherwise}
        \end{cases} \\
        y &= \begin{cases}
          \beta & \text{if } uav \in \{ \bot, \top \}^* \\
          \id & \text{otherwise}
        \end{cases}
      \shortintertext{and, by induction/cross diagram (\ref{eqn:qnInvariant}),}
        \bm{p}_{n, 0} &=_\mathcal{T} \begin{cases}
         \sigma & \text{if } ua = \langle \mathcal{A}' \rangle \text{ s.\,t.\ } \mathcal{A}' \text{ satisfies } \varphi_{n - 1} \text{ and } v \in \{ \bot, \top \}^* \\
         \id & \text{otherwise}
        \end{cases}\\
        \bm{p}_{n, 1} &=_\mathcal{T} \begin{cases}
          \sigma & \text{if } u\tilde{a} = \langle \tilde{\mathcal{A}}' \rangle \text{ s.\,t.\ } \tilde{\mathcal{A}}' \text{ satisfies } \varphi_{n - 1} \text{ and } v \in \{ \bot, \top \}^* \\
          \id & \text{otherwise.}
        \end{cases}
      \end{align*}
      The shaded parts are due to induction (compare to cross diagram (\ref{eqn:qnInvariant})), the lines involving $\alpha_N$ or $\beta_N$ are due to cross diagram (\ref{eqn:gammaN})
      and the ones involving $t_{N - n}$ follow from cross diagram (\ref{eqn:tnCrossDiagram}).
      We may add the commutators to the diagram (due to \autoref{fct:commutatorInCrossDiagrams}) and obtain the gray additions.
      
      The rest of the inductive step is now a case distinction. If we have $uav \not\in \{ \bot, \top \}^*$, we get $x = \id$, $y = \id$, $\bm{p}_{n, 0} =_\mathcal{T} \bm{p}_{n, 1} =_\mathcal{T} \id$ and, thus, for the state sequence on the right $[ \bm{p}_{n, 1}^{y}, \bm{p}_{n, 0}^{x} ] =_\mathcal{T} \id$ (since it consists only of $\id$ and $\id^{-1} =_{\mathcal{T}} \id$ states).
      
      Now, assume $uav \in \{ \bot, \top \}^*$ and, in particular, $a \in \{ \bot, \top \}$.
      In this case, we have $x = \alpha$ and $y = \beta$ as well as
      $u = \langle \mathcal{A} \rangle$ for some $\mathcal{A}: \{ x_{n + 1}, \dots, x_N \} \to \mathbb{B}$ and $\tilde{a} = \lnot a$ (see cross diagram (\ref{eqn:tnCrossDiagram})). Let $ua = \langle \mathcal{A}' \rangle$ and $u\tilde{a} = \langle \tilde{\mathcal{A}}' \rangle$. Note that we have $\mathcal{A}'(x_n) = a = \lnot \tilde{\mathcal{A}}'(x_n)$ (and $\mathcal{A}'(x_m) = \tilde{\mathcal{A}}'(x_m) = \mathcal{A}(x_m)$ for all $n < m \leq N$). If $\mathcal{A}$ satisfies $\varphi_n' = \forall x_n : \varphi_{n - 1}$, we, therefore, have that $\mathcal{A}'$ and $\tilde{\mathcal{A}}'$ both satisfy $\varphi_{n - 1}$. This yields $\bm{p}_{n, 0} =_\mathcal{T} \bm{p}_{n, 1} =_\mathcal{T} \sigma$ (by the above equalities for $\bm{p}_{n, 0}$ and $\bm{p}_{n, 1}$) and, thus, for the state sequence on the right $[ \bm{p}_{n, 1}^\beta, \bm{p}_{n, 0}^\alpha ] =_\mathcal{T} \sigma$ (by the choice of $\sigma, \alpha$ and $\beta$ in \autoref{ex:A5}).
      On the other hand, if $\mathcal{A}$ does not satisfy $\varphi_n' = \forall x_n : \varphi_{n - 1}$, we must have that $\mathcal{A}'$ or $\tilde{\mathcal{A}}'$ does not satisfy $\varphi_{n - 1}$. In this case, we have $\bm{p}_{n, 0} =_\mathcal{T} \id$ or $\bm{p}_{n, 1} =_\mathcal{T} \id$ and, thus, for the state sequence on the right, $[ \bm{p}_{n, 1}^\beta, \bm{p}_{n, 0}^\alpha ] =_\mathcal{T} \id$ by \autoref{fct:commutatorAndNegative}.
      This shows that the cases for the gray additions to the cross diagram in \autoref{fig:inductiveStepCrossDiagram} reflect exactly the black part of cross diagram (\ref{eqn:invariant}), which shows the invariant.

      Considering the special case $n = N$ for cross diagram (\ref{eqn:qnInvariant}), we have obtain
      \begin{equation*}
        \begin{tikzpicture}[baseline=(q.base), auto]
          \matrix[matrix of math nodes, ampersand replacement=\&,
          text height=1.75ex, text depth=0.25ex, column 5/.style={anchor=base west}] (m) {
            \& \varepsilon \& \& v \& \\
            \bm{q}_N \& \& {} \& \& =_\mathcal{T}
            \begin{cases}
              \sigma & \text{if } \varphi_N \text{ holds and } v \in \{ \bot, \top \}^* \\
              \id & \text{otherwise}
            \end{cases}\\
            \& \varepsilon \& \& v \& \\
          };
          \foreach \j in {1} {
            \foreach \i in {1,3} {
              \draw[->] let
                \n1 = {int(2+\i)},
                \n2 = {int(1+\j)}
              in
                (m-\n2-\i) -> (m-\n2-\n1);
              \draw[->] let
                \n1 = {int(1+\i)},
                \n2 = {int(2+\j)}
              in
                (m-\j-\n1) -> (m-\n2-\n1);
            };
          };
        \end{tikzpicture}
      \end{equation*}
      for all $v \in \Sigma^*$ of length $N$. This shows that we have $\bm{q}_N =_\mathcal{T} \id$ if $\varphi_N$ (which is equal to $\varphi$) does \textbf{not} hold. If it does hold, on the other hand, we have $\bm{q}_N \circ \bot^N a = \bot^N \sigma(a) \neq \bot^N a$ for some $a \in \Sigma$ (since $\sigma$ is not the identity permutation). Thus, we may choose $\bm{q} = \bm{q}_N$ as the sought state sequence and it remains to show how an SLP generating $\bm{q}_N$ can be computed in logarithmic space.
      
      Note that $\bm{q}_0$ is already given and we may, thus, begin with the rule $A_0 \to \bm{q}_0$ and add the rules
      \begin{align*}
        A_n' &\to \beta_N^{-1} t_{N - n}^{-1} A_{n - 1}^{-1} t_{N - n} \beta_N \  \alpha_N^{-1} A_{n - 1}^{-1} \alpha_N \  \beta_N^{-1} t_{N - n}^{-1} A_{n - 1} t_{N - n} \beta_N \  \alpha_N^{-1} A_{n - 1} \alpha_N\\
        &= \left[ A_{n - 1}^{t_{N - n} \alpha_N}, A_{n - 1}^{\beta_N} \right] \quad \text{and}\\
        A_n &\to (A_n')^{-1} \sigma_N
      \end{align*}
      for $1 \leq n \leq N$, where we also implicitly add the rules for $(A_n')^{-1}$ and $A_n^{-1}$ by mirroring the right-hand sides and inverting every symbol. Note that these rules may be computed in logarithmic space. We choose $A_N$ as our starting symbol and the reader may verify that $A_n'$ generates $\bm{q}_n'$ and $A_n$ generates $\bm{q}_n$ (this follows directly from the inductive definitions of the $A_n$, $A_n'$ and $\bm{q}_n$, $\bm{q}_n'$).
    \end{proof}
    
    \paragraph*{The Uniform Compressed Word Problem for Finitary Automaton Groups.}
    \autoref{prop:compressedWPisInPSpace} and \autoref{prop:compressedWPisPSpaceHard} form the two directions for the following theorem.
    \begin{theorem}
      The uniform compressed word problem for finitary automaton groups
      \problem{
        a finitary \GAut $\mathcal{T} = (Q, \Sigma, \delta)$ and\newline
        a straight-line program generating a state sequence $\bm{q} \in Q^{\pm *}$
      }{
        is $\bm{q} = \idGrp$ in $\mathscr{G}(\mathcal{T})$?
      }\noindent
      is \PSpace-complete (under many-one $\LogSpace$-reductions).
      This remains true if we fix a set with five elements as the alphabet of the input automaton.
    \end{theorem}
    
    \paragraph*{Binary Alphabet.}
    We could adapt the above reduction to use the automaton generating Grigorchuk's group from \autoref{ex:grigorchuk} instead of the one for $A_5$. This, however, makes the proof even more technical and there is a direct way to reduce the compressed word problem of Grigorchuk's group, which known to be \PSpace-complete \cite{bartholdi2019groups}, to our problem (although the reduction is less direct).
    
    For this, we extend the notion of finite approximability to SLPs.
    \begin{definition}\label{def:compressiblyfinitelyApproximable}
      A family $\mathcal{F}$ of \GAuta is \emph{compressibly finitely approximable} if the function
      \function
        [the family $\mathcal{F}$ of \GAuta]
        {a \GAut $\mathcal{T} = (Q, \Sigma, \delta) \in \mathcal{F}$ and\newline
         a straight-line program generating a state sequence $\bm{q} \in Q^{\pm*}$}
        {a finitary \GAut $\mathcal{T}' = (Q', \Sigma, \delta')$ and\newline
         a straight-line program generating a state sequence $\bm{q}' \in Q'^{\pm *}$\newline
         with $\bm{q}' \neq \idGrp$ in $\mathscr{G}(\mathcal{T}') \iff \bm{q} \neq \idGrp$ in $\mathscr{G}(\mathcal{T})$
        }\noindent
      is $\LogSpace$-computable.
    \end{definition}

    Again (compare to \autoref{prop:finitelyApproximable}), this notion plays nicely with contracting automata:
    \begin{proposition}\label{prop:compressiblyFinitelyApproximable}
      Let $\mathcal{R}$ be a contracting \GAut. Then any family $\mathcal{F}$ of $\mathcal{R}$-finitary \GAuta is compressibly finitely approximable.
    \end{proposition}
    \begin{proof}
      As a first step, we ensure that all the rules of the input SLP are either of the form $A \to x$ or $A \to xy$ where $x$ and $y$ may be terminal (i.\,e.\ states) or non-terminal symbols. To do this, we successively break up long rules $A \to x_1 x_2 \dots x_\ell$ into $A \to x_1 B$ and $B \to x_2 \dots x_\ell$ with a new non-terminal symbol $B$. This can be done in $\LogSpace$ as we only need a pointer into the input SLP and a counter for the new non-terminals. As there are at most quadratically many such new symbols (every rule may create at most linearly many), this counter can certainly be realized in binary within logarithmic space.
      
      From now on, we may assume that the rules of the input SLP are of the above form (as \LogSpace-computable functions may be composed within \LogSpace). Consider the unique syntax tree for the input SLP (generating $\bm{q}$). On any path from its root to a leaf, every non-terminal symbol may appear only once (as, otherwise, we would be able to generate more than one word), which yields $|V|$ as an upper bound on the depth of the syntax tree where $V$ is the set of non-terminal symbols of the input grammar. Since, by our previous normalization, every node in the syntax tree has at most two children, this shows that $2^{|V|}$ is an upper bound for the number of leaves of the syntax tree. In other words, we have $|\bm{q}| \leq 2^{|V|}$.
      \pagebreak
      
      The rest of the proof is now almost identical to the one for \autoref{prop:finitelyApproximable}: we choose $D = |Q| + A \log |\bm{q}| + B$ where $Q$ comes from the input \GAut $\mathcal{T} = (Q, \Sigma, \delta) \in \mathcal{F}$ (and is, thus, given in unary), $\log |\bm{q}| \leq \log 2^{|V|} = |V|$ is also given in unary and $A$ and $B$ are the constants for the contracting \GAut $\mathcal{R}$ from \autoref{lem:finitarilyContracting}. This means that we may compute the finite approximation $\mathcal{T}'$ of depth $D$ of $\mathcal{T}$ in logarithmic space by \autoref{fct:finiteApproximationLogSpace}.
      
      For the state sequence $\bm{q}'$, we choose the projection of $\bm{q}$ in $\mathcal{T}'$ (see \autoref{def:finiteApproximation}), for which we have already shown
      \[
        \bm{q}' \neq \idGrp \text{ in } \mathscr{G}(\mathcal{T}') \iff
        \bm{q} \neq \idGrp \text{ in } \mathscr{G}(\mathcal{T})
      \]
      in the proof of \autoref{prop:finitelyApproximable}. It remains to describe how we may obtain an SLP generating $\bm{q}'$ from the input one generating $\bm{q}$ within logarithmic space. However, this can easily be done by replacing every terminal symbol $q \in Q$ (or $q^{-1}$) in any rule of the SLP by the new terminal symbol $(q, 0)$ (or $(q, 0)^{-1}$) (which does not even make use of the logarithmic space).
    \end{proof}

    We apply \autoref{prop:compressiblyFinitelyApproximable} only to the singleton family containing the contracting automaton generating Grigorchuk's group (from \autoref{ex:grigorchuk}), which yields our last result:
    \begin{theorem}
      The uniform compressed word problem for finitary automaton groups with binary alphabet
      \problem
        [the binary alphabet $\Sigma = \{ 0, 1 \}$]
        {a finitary \GAut $\mathcal{T} = (Q, \Sigma, \delta)$ and\newline
         a straight-line program generating a state sequence $\bm{q} \in Q^{\pm *}$}
        {is $\bm{q} = \idGrp$ in $\mathscr{G}(\mathcal{T})$?}\noindent
      is \PSpace-complete.
    \end{theorem}
    \begin{proof}
      We only need to show the \PSpace-hard part of the statement by \autoref{prop:compressedWPisInPSpace}.
      Let $\mathcal{G}$ denote the \GAut from \autoref{ex:grigorchuk} generating Grigorchuk's group. The compressed word problem of Grigorchuk's group
      \problem
        [the \GAut $\mathcal{G}= (Q, \{ 0, 1 \}, \delta)$]
        {a straight-line program generating a state sequence $\bm{q} \in Q^{\pm *}$}
        {is $\bm{q} = \idGrp$ in $\mathscr{G}(\mathcal{G})$?}\noindent
      is \PSpace-complete \cite{bartholdi2019groups}. Furthermore, $\mathcal{G}$ is contracting (see \autoref{ex:grigorchuk}) and the singleton family $\{ \mathcal{G} \}$ is trivially $\mathcal{G}$-finitary. Thus, this family is compressibly finitely approximable by \autoref{prop:compressiblyFinitelyApproximable} and this yields that the function from \autoref{def:compressiblyfinitelyApproximable} is a \LogSpace-reduction from the above compressed word problem of Grigorchuk's group to the uniform compressed word problem in the theorem statement.
    \end{proof}
  \end{section}

  \clearpage
  \section*{Acknowledgments}
    The authors would like to thank Armin Weiß for many discussions around the presented topic. The presented results are based on results from the first author's Bachelor thesis, which was advised by the second author (while he was at FMI). This work was mainly produced while the second author was affiliated with the Dipartimento di Matematica of the Politecnico di Milano and funded by the Deutsche Forschungsgemeinschaft (DFG, German Research Foundation) – 492814705. The affiliation listed above is the current affiliation of the second author, partly funded by ERC grant 101097307.

  \bibliographystyle{plain}
  \bibliography{references}

\begin{thebibliography}{10}

\bibitem{barrington89boundedWidth}
David A.~Mix Barrington.
\newblock Bounded-width polynomial-size branching programs recognize exactly
  those languages in ${NC}^1$.
\newblock {\em Journal of Computer and System Sciences.}, 38(1):150--164, 1989.

\bibitem{bartholdi2019groups}
Laurent Bartholdi, Michael Figelius, Markus Lohrey, and Armin Weiß.
\newblock Groups with {{\ComplexityClass{{ALogTime}}}}-hard word problems and
  {\PSpace}-complete compressed word problems.
\newblock volume~14, pages 1--41, 2022.

\bibitem{bartholdi2017word}
Laurent Bartholdi and Ivan Mitrofanov.
\newblock The word and order problems for self-similar and automata groups.
\newblock {\em Groups, Geometry, and Dynamics}, 14:705--728, 2020.

\bibitem{bartholdi2010groups}
Laurent Bartholdi and Pedro Silva.
\newblock Groups defined by automata.
\newblock In Jean-\'{E}ric Pin, editor, {\em Handbook of Automata Theory},
  volume~II, chapter~24, pages 871--911. European Mathematical Society, 2021.

\bibitem{bassino2020complexity}
Frédérique Bassino, Ilya Kapovich, Markus Lohrey, Alexei Miasnikov, Cyril
  Nicaud, Andrey Nikolaev, Igor Rivin, Vladimir Shpilrain, Alexander Ushakov,
  and Pascal Weil.
\newblock {\em Complexity and Randomness in Group Theory}.
\newblock De Gruyter, 2020.

\bibitem{bishop2019bounded}
Alex Bishop and Murray Elder.
\newblock Bounded automata groups are co-{{ET0L}}.
\newblock In Carlos Mart{\'i}n-Vide, Alexander Okhotin, and Dana Shapira,
  editors, {\em Language and Automata Theory and Applications}, pages 82--94.
  Springer International Publishing, 2019.

\bibitem{waechter2021orbits}
Ievgen Bondarenko and Jan~Philipp W\"{a}chter.
\newblock On orbits and the finiteness of bounded automaton groups.
\newblock {\em International Journal of Algebra and Computation},
  31(06):1177--1190, 2021.

\bibitem{bondarenko2013conjugacy}
Ievgen~V. Bondarenko, Natalia~V. Bondarenko, Said~N. Sidki, and Flavia~R.
  Zapata.
\newblock On the conjugacy problem for finite-state automorphisms of regular
  rooted trees.
\newblock {\em Groups, Geometry, and Dynamics}, 7:232--355, 2013.

\bibitem{bondarenko2003postCritically}
Ievgen~V. Bondarenko and Volodymyr~V. Nekrashevych.
\newblock Post-critically finite self-similar groups.
\newblock {\em Algebra and Discrete Mathematics}, 2(4):21--32, 2003.

\bibitem{cook1987problems}
Stephen~A. Cook and Pierre McKenzie.
\newblock Problems complete for deterministic logarithmic space.
\newblock {\em J. Algorithms}, 8(3):385--394, 1987.

\bibitem{dangeli2017complexity}
Daniele D'Angeli, Emanuele Rodaro, and Jan~{\relax Ph}ilipp Wächter.
\newblock On the complexity of the word problem for automaton semigroups and
  automaton groups.
\newblock {\em Advances in Applied Mathematics}, 90:160 -- 187, 2017.

\bibitem{gillibert2014finiteness}
Pierre Gillibert.
\newblock The finiteness problem for automaton semigroups is undecidable.
\newblock {\em International Journal of Algebra and Computation}, 24(01):1--9,
  2014.

\bibitem{gillibert2018automaton}
Pierre Gillibert.
\newblock An automaton group with undecidable order and {{E}}ngel problems.
\newblock {\em Journal of Algebra}, 497:363 -- 392, 2018.

\bibitem{grigorchuk2008groups}
Rostislav~I. Grigorchuk and Igor Pak.
\newblock Groups of intermediate growth: an introduction.
\newblock {\em {L'Enseignement Mathématique}}, 54(3-4):251--272, 2008.

\bibitem{hopcroft1979introduction}
John~E. Hopcroft and Jeffrey~D. Ullman.
\newblock {\em Introduction to Automata Theory, Languages and Computation}.
\newblock Addison-Wesley, 1979.

\bibitem{krohn1966realizing}
Kenneth Krohn, Ward~Douglas Maurer, and John~L. Rhodes.
\newblock Realizing complex boolean functions with simple groups.
\newblock {\em Information and Control}, 9(2):190--195, 1966.

\bibitem{1lipton1977word}
Richard~J. Lipton and Yechezkel Zalcstein.
\newblock Word problems solvable in {{$\ComplexityClass{LogSpace}$}}.
\newblock {\em J. ACM}, 24(3):522--526, 1977.

\bibitem{lohrey2014compressed}
Markus Lohrey.
\newblock {\em The Compressed Word Problem for Groups}.
\newblock SpringerBriefs in Mathematics. Springer, 2014.

\bibitem{makanin1984decidability}
Gennadi{\'{\i}}~S. Makanin.
\newblock Decidability of the universal and positive theories of a free group.
\newblock {\em Izv. Akad. Nauk SSSR}, Ser. Mat. 48:735--749, 1984.
\newblock In Russian; English translation in: {\em Math.~USSR Izvestija, 25},
  75--88, 1985.

\bibitem{malcev1962equation}
Anatolij~I. {Mal'cev}.
\newblock On the equation {$zxyx^{-1}y^{-1}z^{-1}= aba^{-1}b^{-1}$} in a free
  group.
\newblock {\em Akademiya Nauk SSSR. Sibirskoe Otdelenie. Institut Matematiki.
  Algebra i Logika}, 1(5):45--50, 1962.

\bibitem{maurer1965property}
Ward~Douglas Maurer and John~L. Rhodes.
\newblock A property of finite simple non-abelian groups.
\newblock {\em Proceedings of the American Mathematical Society},
  16(3):552--554, 1965.

\bibitem{nekrashevych2005self}
Volodymyr~V. Nekrashevych.
\newblock {\em Self-similar groups}, volume 117 of {\em Mathematical Surveys
  and Monographs}.
\newblock American Mathematical Society, Providence, RI, 2005.

\bibitem{papadimitriou97computational}
Christos~M. Papadimitriou.
\newblock {\em Computational Complexity}.
\newblock Addison-Wesley, 1994.

\bibitem{sidki2000automorphisms}
Said~N. Sidki.
\newblock Automorphisms of one-rooted trees: growth, circuit structure, and
  acyclicity.
\newblock {\em Journal of Mathematical Sciences}, 100(1):1925--1943, 2000.

\bibitem{silva2012groups}
Pedro~V. Silva.
\newblock Groups and automata: A perfect match.
\newblock In Martin Kutrib, Nelma Moreira, and Rog{\'e}rio Reis, editors, {\em
  Descriptional Complexity of Formal Systems}, pages 50--63, Berlin,
  Heidelberg, 2012. Springer Berlin Heidelberg.

\bibitem{steinberg2015some}
Benjamin Steinberg.
\newblock {\em On some algorithmic properties of finite state automorphisms of
  rooted trees}, volume 633 of {\em Contemporary Mathematics}, pages 115--123.
\newblock American Mathematical Society, 2015.

\bibitem{sunic2012conjugacy}
Zoran {\v{S}}uni{\'c} and Enric Ventura.
\newblock The conjugacy problem in automaton groups is not solvable.
\newblock {\em Journal of Algebra}, 364:148--154, 2012.

\bibitem{waechter2022automaton}
Jan~Philipp Wächter and Armin Weiß.
\newblock An automaton group with {\PSpace}-complete word problem.
\newblock {\em Theory of Computing Systems}, 2022.

\bibitem{waechter2024word}
Jan~Philipp Wächter and Armin Weiß.
\newblock {\em Automata and Languages -- GAGTA Book 3}, chapter \enquote{The
  Word Problem for Automaton Groups}.
\newblock DeGruyter, 2024.
\newblock In preparation.

\end{thebibliography}
\end{document}